\documentclass{cta-author}

\usepackage{amsthm}
\usepackage{amsmath,amssymb,amsfonts}

\usepackage{xcolor}
\usepackage{graphicx}

\usepackage{subcaption}

\newcommand{\ml}{}
\newcommand{\dan}{}
\newcommand{\mml}{}

\usepackage{varioref}
\usepackage{hyperref}
\usepackage[capitalise]{cleveref}

\newtheorem{theorem}{Theorem}
\newtheorem{lemma}{Lemma}

\newtheorem{remark}{Remark}

\begin{document}

\supertitle{Brief Paper}

\title{Stochastic Optimal Investment Strategy for Net-Zero Energy Houses}

\author{\au{Mengmou Li $^1$} \au{Taichi Tanaka $^1$} \au{A. Daniel Carnerero $^1$} \au{Yasuaki Wasa $^2$} \au{Kenji Hirata $^3$} \au{Yasumasa Fujisaki $^4$} \au{Yoshiaki Ushifusa $^5$} \au{Takeshi Hatanaka $^1$}  }

\address{\add{1}{Department of Systems and Control Engineering, School of Engineering, Tokyo Institute of Technology, Tokyo 152-8550, Japan.}
\add{2}{Department of Electrical Engineering and Bioscience, School of Advanced Science and Engineering, Faculty of Science and Engineering, Waseda University, Tokyo 169-8555, Japan.}
\add{3}{Department of Electrical and Electronic Engineering, School of Engineering, University of Toyama,
Toyama 930-8555, Japan.}
\add{4}{Department of Information and Physical Sciences, Graduate School of Information Science and Technology, Osaka University, Osaka 565 0871, Japan}
\add{5}{Faculty of Economics and Business Administration, The University of Kitakyushu, Fukuoka 802-8577, Japan.}
\email{ \{li.m; tanaka; carnerero\}@hfg.sc.e.titech.ac.jp; wasa@waseda.jp; hirata@eng.u-toyama.ac.jp; fujisaki@ist.osaka-u.ac.jp; ushifusa@kitakyu-u.ac.jp; hatanaka@sc.e.titech.ac.jp.}}

\begin{abstract}
\looseness=-1 
In this research, we investigate Net-Zero Energy Houses (ZEH), which harness regionally produced electricity from photovoltaic (PV) panels and fuel cells,  integrating them into a local power system in pursuit of achieving carbon neutrality. This paper examines the impact of electricity sharing among users who are working towards attaining ZEH status through the integration of PV panels and battery storage devices. We propose two potential scenarios: the first assumes that all users individually invest in storage devices, hence minimizing their costs on a local level without energy sharing; the second envisions cost minimization through the collective use of a shared storage device, managed by a central manager. These two scenarios are formulated as a stochastic convex optimization and a cooperative game, respectively.  To tackle the stochastic challenges posed by multiple random variables, we apply the Monte Carlo sample average approximation (SAA) to the problems. To demonstrate the practical applicability of these models, we implement the proposed scenarios in the Jono neighborhood in Kitakyushu, Japan.
\end{abstract}

\maketitle

\section{Introduction}
Addressing the urgent and formidable challenge of global warming by reducing $\text{CO}_2$ emissions has become a paramount concern, largely due to the detrimental effects of climate change on both the environment and human society. In 2019, one-third of Japan's electricity was generated from coal, and fossil fuels accounted for $88\%$ of Japan's power supply \cite{Country2020,Japan2021}. As the Japanese government aims for carbon neutrality by 2050 \cite{CarbonNeu}, renewable energy sources like solar and wind power are gaining importance as alternative fuel sources. Although renewable energy constituted less than $10\%$ of Japan's total energy consumption in 2019, the decreasing costs of solar and wind power, along with the ongoing economic recovery from COVID-19, are expected to increase the share of renewable energy in Japan's energy mix in the coming years \cite{Country2020}.

As the world strives for carbon neutrality and the integration of renewable energy sources, the concept of Net-Zero Energy Houses (ZEH) is getting increasing attention as a technology to reduce $\text{CO}_2$ emissions at home. A ZEH is a house that has an annual net energy consumption of around zero
by maximizing energy savings while maintaining a comfortable living environment. 
Achieving an annual net energy consumption of around zero is made possible through improved heat insulation, high-efficiency equipment, and photovoltaic (PV) power generation \cite{efficiency2015definition}.
To attain ZEH status, devices such as PV panels, fuel cells, and battery storage devices must be integrated into a regional power system \cite{wu2021residential}. However, it is vital to quantitatively assess the effects of introducing these devices to ensure the safe penetration of ZEH into the power grid.

In general, PV panels do not generate stable power output. The incorporation of battery storage can help mitigate these power fluctuations by effectively charging and discharging as needed. The integration of PV systems and battery storage into power grids has been a research focus for many years, and numerous studies have been conducted, including notable reviews such as \cite{hoppmann2014economic,azuatalam2019energy,khezri2022optimal}. 
In \cite{hoppmann2014economic}, the economic aspect of battery storage is reviewed and a techno-economic model for calculating the profitability of investing in battery storage is proposed for residential PV systems in Germany. 
In \cite{azuatalam2019energy}, a systematic review of energy management strategies for PV-battery systems under various scenarios is conducted. 
In \cite{khezri2022optimal}, a comprehensive review of key parameters to be considered for optimal planning of PV panels and battery storage systems is provided.
In particular, various optimization-based analyses have been conducted on the design and management of PV panels and battery storage systems under different scenarios, including those by \cite{pham2009optimal,zhu2014optimal,ratnam2015optimization, khalilpour2016planning,bordin2017linear,yan2018optimized}.
Ref.~\cite{pham2009optimal} considers the optimal operation of a PV-battery energy system using anticipation and reactive management.
Ref.~\cite{zhu2014optimal} proposes an optimal design and management strategy for an energy system with PV and battery storage to reduce energy costs.
Ref.~\cite{khalilpour2016planning} introduced a multi-period mixed-integer linear program to solve the investment problem of PV-battery systems.
Ref.~\cite{bordin2017linear} considers battery degradation costs in the energy management of off-grid systems. 
Ref.~\cite{yan2018optimized} focused on the energy management of a commercial EV charging station with integrated PV generation and battery storage. 

In recent years, sharing economy models via battery storage have become crucial for managing energy and reducing electricity costs in regional power systems \cite{wang2013active,kalathil2017sharing,parra2017interdisciplinary,chakraborty2018sharing,wang2019incentive,henni2021sharing}.
An energy management strategy for demand response using shared battery storage is proposed in \cite{wang2013active}.
Ref.~\cite{kalathil2017sharing} formulates a cooperative game for the optimal battery storage investment of a group of users sharing their electricity storage, considering fluctuations in power prices, while \cite{chakraborty2018sharing} uses cooperative game theory to evaluate scenarios where users invest in battery storage individually or jointly. 
Ref.~\cite{wang2019incentive} presents an incentive design to encourage users to share power obtained from PV generation in local power systems. 
In \cite{henni2021sharing}, a sharing economy model allowing residential communities to share excess PV generation, stored electricity, and storage capacity is proposed. 
However, these studies do not specifically address the optimal investment strategy to attain ZEH status. Numerous critical studies on Net-Zero Energy Buildings (ZEB) have been presented in \cite{vieira2017energy,harkouss2019optimal,ahmed2022assessment} and references therein. While these studies provide invaluable insights, they do not consider the optimal strategies for investment in energy sharing among users, which requires further investigation.

\subsection{A motivating case study}\label{subsection jono}
Jono, a neighborhood in Kitakyushu, Japan, serves as an ideal case study for our research. This community consists of $140$ households, which together form a distinct regional power system connected to the main power grid. These households have installed rooftop PV panels.
Fig.~\ref{fig:Power consumption X} and \ref{fig:Power generation Y}, display the daily power consumption and generation for one specific household.
The figures highlight the time-varying nature of power generation from PV panels, making it challenging to maintain a power balance within the system. However, using batteries to store excess energy during peak solar generation times and drawing upon this stored energy when solar panels produce less can potentially mitigate this power imbalance and attain ZEH status within the regional power system.
\begin{figure}[t]
     \centering
     \begin{subfigure}[b]{1\linewidth}
         \centering
         \includegraphics[width = 1\linewidth]{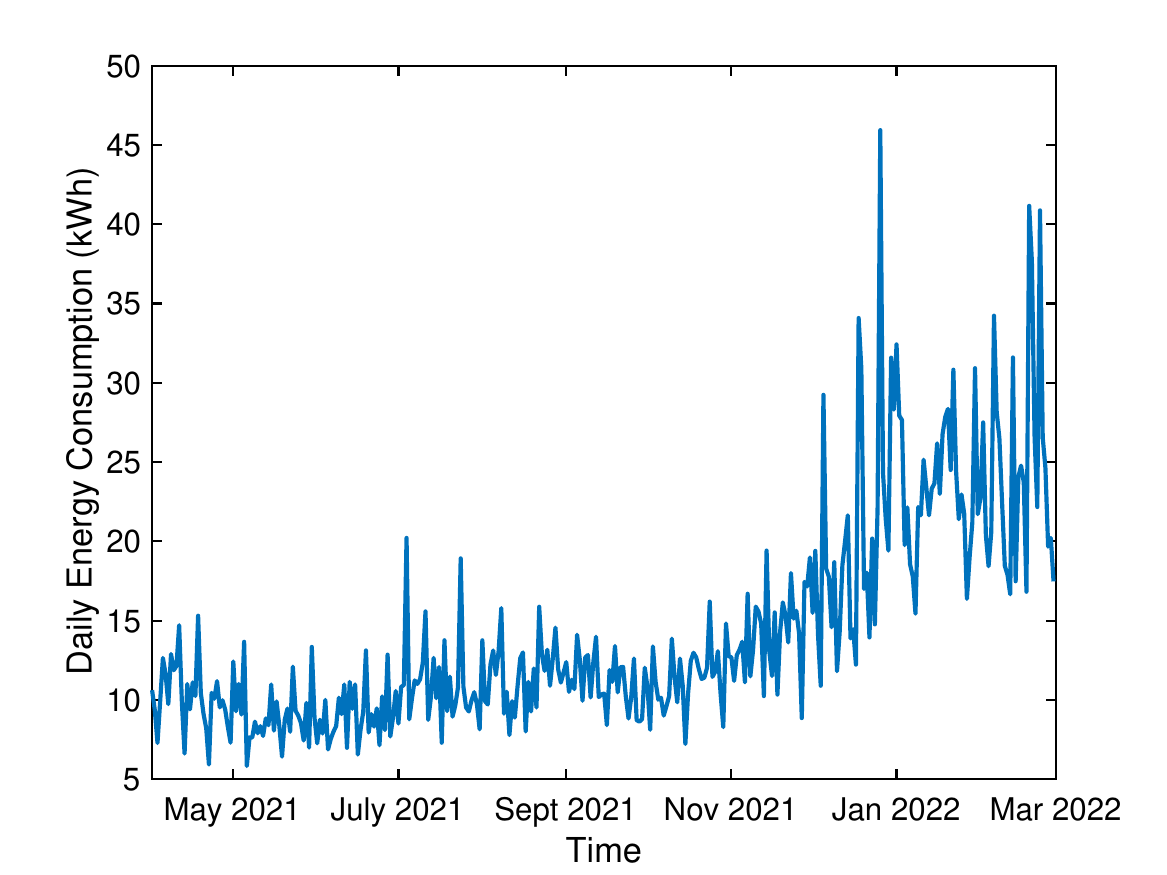}
         \caption{Daily energy consumption.}
         \label{fig:Power consumption X}
     \end{subfigure}
     \hfill
     \begin{subfigure}[b]{1\linewidth}
         \centering
         \includegraphics[width = 1\linewidth]{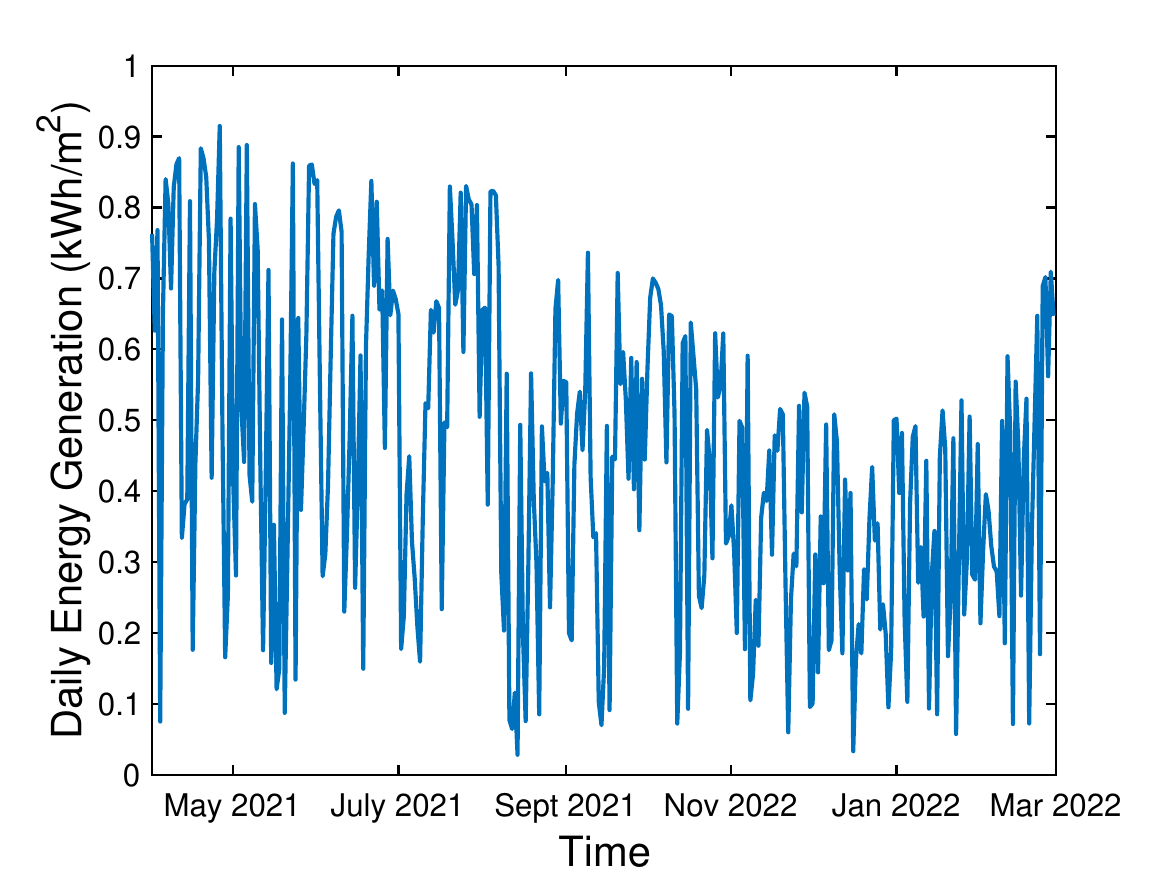}
         \caption{Daily power generation from PV panels per $\text{m}^2$.}
         \label{fig:Power generation Y}
     \end{subfigure}
        \caption{The daily power consumption and power generation from PV panels per $\text{m}^2$ for one household over 334 days.}
        \label{fig:Power consumption and generation}
\end{figure}
This example also suggests that power generation from photovoltaic systems can be modeled using random variables based on existing data.
 We will investigate the optimal investment problem of PV panels and battery storage devices from an economic standpoint.

\subsection{Contributions}
In this paper, we propose an optimal investment strategy for PV panels and battery storage devices within a regional power system aiming to attain ZEH status for users, with a focus on electricity sharing.
We consider two scenarios involving users and a manager in the regional power system. We examine two scenarios within the regional power system, involving users and a manager. The first scenario focuses on individual investment in battery storage, while the second scenario revolves around joint investment in a shared storage solution. These scenarios are formulated as stochastic convex optimization and cooperative game problems respectively, with cost functions demonstrated to be convex.
{As the proposed investment problems are of stochastic nature and thus many random variables are present (PV power generation, daily consumption, etc.), we employ the Monte Carlo sample average approximation approach, which allows us to generate solvable deterministic convex problems.}
\ml{It is important to note that in the application to the Jono neighborhood, the sampling data generated is derived from real historical data on PV generation and power consumption.}

\mml{
Compared to the aforementioned literature on PV-battery power systems, our research specifically centers around the optimization-based investment of PV and battery storage, coupled with power sharing for attaining Net-Zero Energy Houses. Notably, our approach is distinct as it utilizes real data collected from an entire neighborhood, comprising multiple users.
In contrast to the formulations of power sharing presented in \cite{kalathil2017sharing} and \cite{chakraborty2018sharing}, our approach incorporates the investment considerations for PV panels. Moreover, by introducing a manager into the framework, we formulate a game with convex functions that are globally solvable, preferable to nonconvex functions employed in \cite{kalathil2017sharing}.}

The remainder of the paper is organized as follows. In \Cref{Section Optimal Investment Problems}, we formulate the two scenarios for optimal investment in PV panels and battery storage. In \Cref{Section Stochastic optimal investment strategy}, we apply the Monte Carlo sample average approximation (SAA) approach to solve the stochastic problems. In \Cref{Section A case study}, we discuss a case study in Jono using the two scenarios. Finally, we conclude the paper in \Cref{Section Conclusion}.

\section{Optimal Investment Problems}\label{Section Optimal Investment Problems}

\ml{Notation: $\mathbb{E} [\cdot]$ and $\operatorname{\mathbb{V}ar} [\cdot]$ represent the expected value and variance of random variables. $\|\cdot\|$ denotes the max-norm of a vector, i.e., $\|x\| = \max \{x_1, \ldots, x_n\}$, where $x = (x_1, \ldots, x_n) \in \mathbb{R}^{n}$.}

Let us consider a regional power grid consisting of $n$ households and one manager. We assume that users in this regional power system generate electricity using rooftop PV panels and fuel cells, and may also exchange electricity with the manager. The manager is connected to the main power grid, and exchanges power with both individual users and the power grid.
The goal of ZEH is to balance each user's energy consumption with the energy generated by PV power generation, fuel cells, and other sources, ultimately reducing the net energy consumption to zero. Assuming that reducing the net energy consumption of the entire region to zero is practically achievable using ZEH, we propose the following two formulations regarding investment plans for PV panels and battery storage devices:
\begin{enumerate}
    \item In the first scenario, we assume that all users invest in individual storage devices and minimize their costs locally without sharing energy.
    \item In the second scenario, all users aim to minimize their own costs while sharing a common joint storage device, the capacity of which is distributed to users by the manager.
\end{enumerate}
The relationships among users, the manager, and the main power grid for the two scenarios are illustrated in \Cref{fig:Individual_Battery,fig:User_manager}, respectively.
\begin{figure}[htbp]
    \centering
    \begin{subfigure}[b]{1\linewidth}
         \centering
         \includegraphics[width = 1\linewidth]{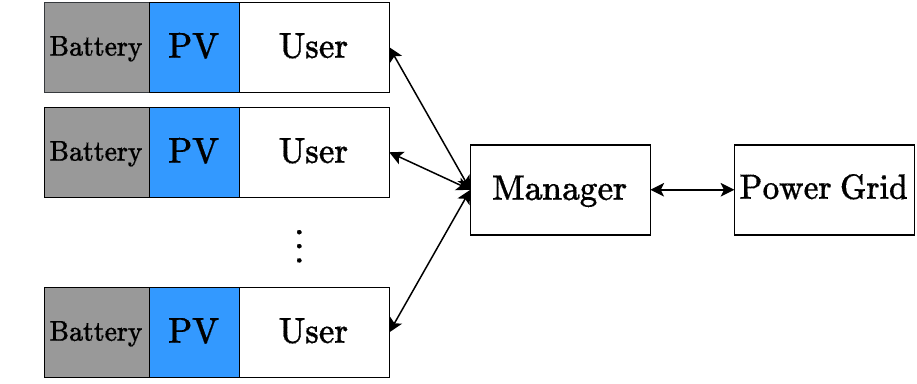}
         \caption{\dan{First scenario:} users have individual battery storage devices.}
         \label{fig:Individual_Battery}
    \end{subfigure}
    \begin{subfigure}[b]{1\linewidth}
         \centering
         \includegraphics[width = 1\linewidth]{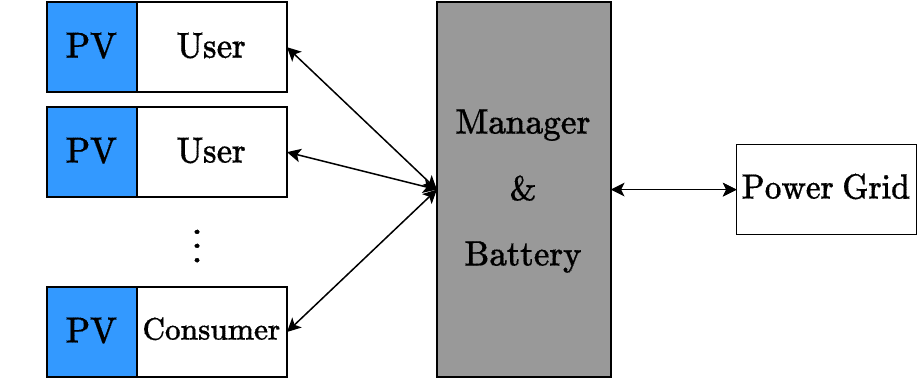}
         \caption{\dan{Second scenario: }users share a common  storage device.}
         \label{fig:User_manager}
    \end{subfigure}
    \caption{\dan{Users, manager and the power grid in both scenarios.}}
    \label{fig:my_label}
\end{figure}
\subsection{Individual storage investment}
In this subsection, we examine the scenario in which individual users independently achieve a balance between electricity supply and demand. We propose a stochastic optimization problem for the investment strategy of PV panels and battery storage devices.

In this scenario, we assume that user $i$ will invest in PV panels, battery storage, and a household fuel cell to determine an optimal energy management strategy for $T$ days into the future. Let $X_i (t) ~ \mathrm{[kWh]}$ be the power consumption of user $i$ on the $t$-th day, and $Y_i (t) ~\mathrm{[kWh/m^2]}$ be the power generation per unit area of the PV panels. Both $X_i (t)$ and $Y_i(t)$ are random variables with known distribution functions. 
Then, user $i$ decides to invest a total area of $a_i ~\mathrm{[m^2]}$  for the photovoltaic panels and a total of $C_i~\mathrm{[kWh]}$ for battery storage. 
 
First, let us consider the case where the power consumption $X_i (t)$ cannot be compensated by the power generation $a_i Y_i (t)$ from the PV power generation. In this situation, the user will compensate for the shortage using the remaining capacity of the battery storage, and if the remaining storage is insufficient, they will resort to a household fuel cell. In this case, the purchase cost of gas used for power generation becomes a penalty for not being able to supplement power. Therefore, the cost of power generation by household fuel cells when PV power generation is insufficient is given by
\begin{align}\label{eq: cost of gas individual}
J_{\text{gas}, i} (a_i, C_i, t) = \pi_{\text{gas}} \mathbb{E} \left[  \max  \left\{ X_i (t)  -  a_i Y_i (t) - \beta_i (t)  C_i, 0  \right\} \right],
\end{align}
where $\pi_{\text{gas}} ~ \mathrm{[ \text{\textyen} /kWh]}$ represents the power generation cost per unit of electricity from household fuel cells with \text{\textyen} representing JPY, and the power in the battery storage at the beginning of day $t$ is given by $\beta_i (t) C_i$, with $\beta_i (t) \in [0,1]$.

Next, let us consider the case where the power from the PV power generation  $a_i Y_i (t)$ exceeds the power consumption $X_i (t)$. At this time,  if the residual power exceeds the allowable amount of charge of the battery storage, the surplus power must flow to the main power grid.  However, reverse power flow is an action that puts a burden on the power grid.  Therefore, we try to suppress it by imposing a penalty according to the reverse flow rate.
The cost for the excess of PV power generation is given by
\begin{align}\label{eq: cost of rev individual}
& J_{\text{rev}, i}  (a_i, C_i, t) \nonumber\\
= & \pi_{\text{rev}} \mathbb{E} \left[  \max  \left\{   a_i Y_i (t) - X_i (t) -   \left( 1  -  \beta_i (t) \right) C_i, 0 \right\} \right],
\end{align}
where $\pi_{\text{rev}} ~ \mathrm{[ \text{\textyen}/kWh]}$ represents the price for reverse power flow per unit power.

Lastly, the cost of investing PV panels and battery storage should be considered. Let $\pi_{\text{PV}} ~ \mathrm{[ \text{\textyen}  /m^2 ]}$ denote the capital cost per unit area of the PV panels amortized over its lifespan, and $\pi_{\text{B}} ~ \mathrm{[ \text{\textyen} /kWh]}$ the capital cost per unit capacity of the battery storage, amortized over its lifespan. Then, the cost for the one-time capital investment is given by
\begin{align}\label{eq: capital cost individual}
J_{\text{int}, i} (a_i, C_i)= a_i \pi_{\text{PV}} + C_i \pi_{\text{B}}.
\end{align}
Summing up the above terms \eqref{eq: cost of gas individual}--\eqref{eq: capital cost individual},  we obtain the cost function for user $i$ to achieve the power balance between demand and supply. The investment strategy is formulated as an optimization problem,
\begin{align}\label{stochastic problem individual users}
& \min_{a_i, C_i} \left\{ \vphantom{\sum_{i= 1}^T} J_i (a_i, C_i) = J_{\text{int}, i} (a_i, C_i) \right. \nonumber \\
& \quad \left.  + \sum_{t = 1}^{T} \left\{ J_{\text{gas}, i} (a_i, C_i, t) + J_{\text{rev}, i} (a_i, C_i, t) \right\} \right\}
\end{align}
\ml{where the arguments of $a_i$, $C_i$ on the right-hand side are omitted for ease of notation.}

\subsection{Joint storage investment with storage distribution}
The investment strategy \ml{in the last subsection} can be readily extended to the global investment in PV panels and battery storage devices following similar arguments.
Specifically, the optimization problem for global investment is given by 
\begin{align}\label{eq:global_optimization_costfun}
    \min_{a_1,\ldots,a_n, C_{\text{a}}}  \hspace{-1mm} \left\{ J (a_1, \ldots, a_n, C_{\text{a}}) \hspace{-1mm} = J_{\text{int}} + \sum_{t = 1}^{T} \{ J_{\text{gas}} (t) + J_{\text{rev}} (t)\} \right\}
\end{align}
where 
\begin{align}
    J_{\text{int}} (a, C_{\text{a}})= \sum_{i}^{n} a_i \pi_{\text{PV}} + C_{\text{a}} \pi_{\text{B}},
\end{align}
\begin{align}
    J_{\text{gas}} (t) = \pi_{\text{gas}} \mathbb{E} \left[ \max \left\{ \sum_{i = 1}^{n} \left\{ X_i (t) - a_i Y_i (t) - \beta_i (t) C_{\text{a}} \right\}, 0 \right\} \right],
\end{align}
\begin{align}\label{eq:global_optimization_Jrev}
    & J_{\text{rev}} (t) \nonumber \\
    = & \pi_{\text{rev}} \mathbb{E} \left[ \max \left\{ \sum_{i = 1}^{n} \left\{ a_i Y_i (t) - X_i (t) - (1 - \beta_i (t)) C_{\text{a}}\right\}, 0 \right\} \right],
\end{align}
$a = (a_1, \ldots, a_n)$, and $C_a$ denotes the joint storage to invest. 
However, the storage distribution for users is unclear from the solution to the above optimization. Therefore, a scheme containing the dispatch of the storage devices should be introduced.

In this subsection, we explore another scenario where all users in the region share joint battery storage through a manager and achieve a power balance between demand and supply. In this scenario, user $i$ invests in PV panels and a household fuel cell, while the manager invests in joint battery storage and trades power with the user. The definitions of $T$, $X_i(t)$, and $Y_i (t)$ remain the same as in the previous subsection. 
User $i$ decides on the area $a_i ~ \mathrm{[m^2]}$ of PV panels to invest while the manager determines the capacity $C_{\text{a}} ~ \mathrm{[ kWh]}$ for the battery storage investment. Furthermore, the manager is responsible for making adjustments to achieve the power balance. The overall cost consists of two parts: one from the users and the other from the manager.

First, let us consider the cost borne by the users. The users aim to minimize their cost by designing the area of PV panels $a_i$ based on the capacity $C_{\text{a}}$ provided by the manager. In this scenario, when the power generated by user $i$'s PV panels exceeds their power consumption $X_i (t)$, user $i$ can earn a profit by selling surplus power to the manager. This profit is considered a negative cost, given by
\begin{align}\label{eq: cost in}
J_{\text{s}, i} (t) = - \pi_{\text{in}} \mathbb{E} \left[ \max \left\{ a_i Y_i (t) - X_i (t), 0 \right\} \right]
\end{align}
where $\pi_{\text{in}} ~\mathrm{[\text{\textyen} /kWh]}$ represents the selling price of electricity to the manager. \mml{Note that $\pi_{\text{in}}$ can also be a negative value, analogous to the penalty imposed for reverse power flow into the main power grid in the individual scenario.}

When the power consumption $X_i(t)$ exceeds the power generation $a_i Y_i(t)$ from PV power generation, users can either purchase power from the manager or supplement it with power generated by household fuel cells. 
Here, the following assumptions are made for their selection of power source.
\begin{enumerate}
\item 
The maximum amount of daily purchasable power from the manager for user $i$ is set as $ C_{\text{a}, i}$, where $\sum_{i = 1}^{n} C_{\text{a}, i} = C_{\text{a}}$.
\item
When compensating for power shortages, users will buy electricity from the manager first, and then use household fuel cells if it is still insufficient.
\end{enumerate}
Therefore, for user $i$, the costs of purchasing power from the manager is given by
\begin{align}\label{eq: cost out}
J_{\text{p}, i} (a_i, C_{\text{a},i}, t) = \pi_{\text{out}} \mathbb{E} \left[  \min \left\{ \max \left\{ X_i (t) - a_i Y_i (t), 0 \right\}, C_{\text{a}, i} \right\} \right],
\end{align}
and the cost of getting power from the household fuel cell is given by
\begin{align}\label{eq: cost gas}
J_{\text{gas}, i} (a_i, C_{\text{a},i}, t) = \pi_{\text{gas}} \mathbb{E} \left[ \max \left\{ X_i (t) - a_i Y_i (t) - C_{\text{a}, i}, 0 \right\} \right]
\end{align}
respectively, where $\pi_{\text{out}} ~ \mathrm{[ \text{\textyen}/kWh]}$ represents the selling price per unit of electricity by the manager, and $\pi_{\text{gas}} ~ \mathrm{[\text{\textyen} /kWh]}$ represents the power generation cost per unit of electricity from the household fuel cells. 

Furthermore, the cost of PV panels for user $i$ is expressed by the following,
\begin{align}\label{eq: cost PV}
J_{\text{intp}, i} (a_i) = a_i \pi_{\text{PV}}
\end{align}
where $\pi_{\text{PV}}$ represents the price per unit area of PV panels.

Summarizing, the investment strategy for user $i$ is formulated as an optimization problem
\begin{align}\label{stochastic problem users part}
\min_{a_i} \left\{  J_i (a_i, C_{\text{a},i}) = J_{\text{intp}} + \sum_{t = 1}^{T}  \left\{J_{\text{s}, i} (t) + J_{\text{p}, i} (t) + J_{\text{gas}, i} (t) \right\}  \right\}
\end{align}
\ml{where the arguments of $a_i$, $C_{\text{a},i}$ on the right-hand side are omitted for ease of notation.}
Next, let us consider the cost borne by the manager.  As the manager is responsible for power exchange with all users, the total amount of power bought from users is given by
\begin{align}
P_{\text{in}} (a, t) = \sum_{i = 1}^{n} \max  \left\{ a_i Y_i (t) - X_i (t) , 0\right\}
\end{align}
and power sold to users is given by
\begin{align}
P_{\text{out}} (a, C_{\text{a}}, t) \hspace{-0mm} = \hspace{-0mm} \sum_{i = 1}^{n}  \min \left\{  \max \left\{ X_i (t) \hspace{-0mm} - \hspace{-0mm} a_i Y_i (t), 0\right\},   C_{\text{a}, i} \right\}
\end{align}
respectively.
Thus, the cost of exchanging power with users is given by
\begin{align}
J_{\text{ex}} (a, C_{\text{a}},t) = \mathbb{E} \left[ \pi_{\text{in}} P_{\text{in}} (t) - \pi_{\text{out}} P_{\text{out}} (t) \right].
\end{align}

When the total amount of power $P_{\text{out}} (t)$ sold to users is greater than the total amount of power $P_{\text{in}} (t)$ purchased from users, the manager will extract power from the joint battery storage from the beginning of day $t$ to compensate for the shortage. If the shortage is still not filled, the manager will purchase power from the main power grid with the price $\pi_{\text{grid}} ~ \mathrm{[\text{\textyen} /kWh] }$ per unit of electricity.
Then, the cost in this case is given by
\begin{align}
\hspace{-2mm} J_{\text{grid}} (C_{\text{a}}, t) = \pi_{\text{grid}} \mathbb{E} \left[ \max \left\{ P_{\text{out}} (t) - P_{\text{in}} (t) - \beta_{\text{a}} (t) C_{\text{a}}, 0 \right\} \right]
\end{align}
where $\beta_{\text{a}} (t) C_{\text{a}}$ denotes the power at the beginning of day $t$, with $\beta_{\text{a}} \in [0, 1]$.

On the other hand, when $P_{\text{out}} (t)$ is less or equal to $P_{\text{in}} (t)$, the manager will charge the surplus power to the battery storage. However, if the surplus power exceeds the charging capacity,  the residual power will flow to the main power grid. Then, a penalty for reverse power flow is imposed as in \eqref{eq: cost of rev individual},
\begin{align}
J_{\text{rev}} (C_{\text{a}},t) = \pi_{\text{rev}}\mathbb{E} \left[ \max \left\{ P_{\text{in}} (t) - P_{\text{out} }(t)  - (1 - \beta_{\text{a}} (t) ) C_{\text{a}} ),  0\right\} \right].
\end{align}
Moreover, the investment cost for the battery storage bought by the manager is 
\begin{align}
J_{\text{intb}} (C_{\text{a}}) = \sum_{i = 1}^{n} C_{\text{a}, i} \pi_{\text{B}} .
\end{align}

To sum up, the investment strategy for the manager is formulated into an optimization problem
\begin{align}\label{stochastic problem manager part}
\underset{ C_{\text{a}, i}, \forall i}{\min} \left\{ J_{\text{a}} = J_{\text{intb}} + \sum_{t = 1}^{T} \left\{ J_{\text{ex}} (t) + J_{\text{grid}} (t)  + J_{\text{rev}} (t)  \right\} \right\}.
\end{align}
where the arguments of $a$, $C_{\text{a}}$ are omitted here for ease of notation.
The manager will determine the capacity $C_{\text{a}} = (C_{\text{a}, i}, \ldots, C_{\text{a}, n})$ to invest for given $a_i$, $\forall i$.

The investment strategy is a cooperative game between users and the manager with cost functions defined in \eqref{stochastic problem users part} and \eqref{stochastic problem manager part}, respectively.
The game is called cooperative in the sense that all users will agree to share joint battery storage \cite{chakraborty2018sharing}.
\ml{It will be shown in the next section that the aforementioned cost functions are convex, and Monte Carlo sample average approximation can be applied to solve these problems.}

\section{Stochastic optimal investment strategy}\label{Section Stochastic optimal investment strategy}
The two scenarios introduced in the previous section are formulated into stochastic problems in the following form,
\begin{align}\label{stochastic problem}
\min_{x \in \mathcal{X}} \left\{ f (x) := \mathbb{E} \left[ F(x, \xi) \right] \right\}
\end{align} 
where $\mathcal{X}$ is a nonempty closed subset of $\mathbb{R}^{r}$, $\xi$ is a random variable whose probability distribution is supported on a set $\Xi \subset \mathbb{R}^{d}$, and $F: \mathbb{X} \times \Xi \to \mathbb{R}$ is a function of the two vector variables. We assume that the expectation function $f(x)$ is well-defined and finite-valued for all $x \in \mathcal{X}$.
The formulated problems involve evaluating the expected value of functions of multiple random variables. Thus, it is difficult to solve the stochastic problem analytically. 
In this section, we apply the Monte Carlo sample average approximation (SAA) approach to obtain an approximated solution to problems \eqref{stochastic problem individual users} and \eqref{stochastic problem users part}, \eqref{stochastic problem manager part}.

\subsection{Monte Carlo sample average approximation}\label{subsection: Monte Carlo sample average approximation}

Let us consider a sample $\xi^{1}, \ldots, \xi^{N}$ of $N$ realization of the random vector $\xi$, which can be viewed as data generated by Monte Carlo sampling techniques.  For any $x\in \mathcal{X}$ we can estimate the expected value $f(x)$ by SAA:
\begin{align}\label{sampled optimization problem}
\min_{x \in \mathcal{X}} \left\{ \hat{f}_{N} (x) : = \frac{1}{N} \sum_{j =1}^{N} F(x, \xi^{j})   \right\}.
\end{align}
Assume  problem \eqref{stochastic problem} is feasible and $x^* : = \arg \min_{x \in \mathcal{X}} f(x)$ is the optimal solution to problem \eqref{stochastic problem}. For $\varepsilon \geq 0$, We say that $x \in \mathcal{X}$ is a $\varepsilon$-optimal solution of the problem of minimization of $f(x)$ over $\mathcal{X}$ if 
\begin{align*}
f(x) \leq f(x^*) + \varepsilon.
\end{align*}
Denote $S$, $\hat{S}_N$ as the sets of optimal solutions 
to problem \eqref{stochastic problem} and \eqref{sampled optimization problem}, respectively.  Then,  $S^{\varepsilon}$, $\hat{S}_{N}^{\delta}$ are defined as the sets of $\varepsilon$-optimal solutions, $\delta$-optimal solutions to \eqref{stochastic problem} and \eqref{sampled optimization problem}, respectively.

Denote $D: = \sup_{x, y \in \mathcal{X}} \| x - y\|$ as the diameter of $\mathcal{X}$ and assume it is finite, where $\| x - y\|$ denotes the max-norm of $x-y$.
Assume further that the expectation function $f(x)$ is $L$-Lipschitz continuous on $\mathcal{X}$, that is,  $\left\| f(x) - f(y) \right\| \leq L \| x - y \|$ for all $x, y \in \mathcal{X}$.
Given a significance level $\alpha \in (0, 1)$ for the null hypothesis,
the relation between the sample size $N$ and the probability of the solutions to problem \eqref{sampled optimization problem} being close enough to those of problem \eqref{stochastic problem} is given by the following theorem.

\begin{theorem}[\hspace{1sp}\cite{ruszczynski2003stochastic}]\label{thm: estimating size}
For all $\varepsilon > 0$ small enough, $0 \leq \delta < \varepsilon$,  and $\alpha \in (0,1)$,  and for the sampling size $N$ satisfying 
\begin{align}\label{sample size}
N \geq \frac{12 \sigma^2}{ (\varepsilon - \delta)^2 } \left( r \ln \frac{2DL}{ \varepsilon - \delta } - \ln \alpha \right),
\end{align}
where $r$ is the dimension of $x$, 
\begin{align}\label{variance}
\sigma^2 := \max_{x \in \mathcal{X} / S^{\varepsilon}} \operatorname{\mathbb{V}ar} \left[ F( x^*, \xi) - F( x, \xi)  \right].
\end{align}
\ml{where $\mathcal{X} / S^{\varepsilon}$ is the domain excluding the $\varepsilon$-optimal solution to problem \eqref{stochastic problem}.}
Then, it follows that 
\begin{align}\label{Confidence interval}
\mathbb{P} (\hat{S}_{N}^{\delta} \subset S^{\varepsilon}) \geq 1 - \alpha.
\end{align}
\end{theorem}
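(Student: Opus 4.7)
The plan is to combine a covering argument with a moderate-deviation concentration inequality, which is the standard route for sample-complexity bounds of this type in stochastic programming. First I would contrapose: the bad event that the inclusion $\hat S_N^\delta\subset S^\varepsilon$ fails means there exists $\hat x\in\hat S_N^\delta\setminus S^\varepsilon$ with $f(\hat x)>f(x^*)+\varepsilon$ and simultaneously $\hat f_N(\hat x)\le\hat f_N(x^*)+\delta$. Subtracting these two inequalities rewrites the bad event as
\begin{equation*}
\sup_{x\in\mathcal X\setminus S^\varepsilon}\Bigl\{\bigl[\hat f_N(x^*)-\hat f_N(x)\bigr]-\bigl[f(x^*)-f(x)\bigr]\Bigr\}>\varepsilon-\delta,
\end{equation*}
so the task becomes a one-sided uniform deviation bound for the centred empirical process indexed by $x\in\mathcal X\setminus S^\varepsilon$.

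Next I would discretise $\mathcal X$. Because the max-norm diameter is $D$, a regular grid $\mathcal X_\nu\subset\mathcal X$ of mesh $\nu=(\varepsilon-\delta)/(4L)$ has cardinality at most $\lceil D/\nu\rceil^r\le(2DL/(\varepsilon-\delta))^r$, and for every $x\in\mathcal X$ there is a grid point $\tilde x\in\mathcal X_\nu$ with $\|x-\tilde x\|\le\nu$. The $L$-Lipschitz continuity of $f$ transfers the uniform supremum on $\mathcal X\setminus S^\varepsilon$ to a supremum on $\mathcal X_\nu$, at the cost of absorbing $2L\nu=(\varepsilon-\delta)/2$ into the threshold; provided $F(\cdot,\xi)$ is likewise Lipschitz (the natural counterpart assumption implicit in this framework), the analogous discretisation error appears on the empirical side and one ends up needing $\hat f_N(x^*)-\hat f_N(\tilde x)-\bigl(f(x^*)-f(\tilde x)\bigr)>(\varepsilon-\delta)/2$ for some $\tilde x\in\mathcal X_\nu$.

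For each fixed $\tilde x\in\mathcal X_\nu$, the quantity $\hat f_N(x^*)-\hat f_N(\tilde x)$ is the empirical mean of $N$ i.i.d.\ random variables $Z_j=F(x^*,\xi^j)-F(\tilde x,\xi^j)$ whose variance is bounded by $\sigma^2$ as defined in \eqref{variance}. A one-sided Bernstein/Cram\'er-type inequality then yields
\begin{equation*}
\mathbb P\Bigl(\hat f_N(x^*)-\hat f_N(\tilde x)-\bigl[f(x^*)-f(\tilde x)\bigr]>\tfrac{\varepsilon-\delta}{2}\Bigr)\le\exp\!\Bigl(-\tfrac{N(\varepsilon-\delta)^2}{c\,\sigma^2}\Bigr)
\end{equation*}
for an absolute constant $c$. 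A union bound over the at most $(2DL/(\varepsilon-\delta))^r$ grid points, set $\le\alpha$, gives
\begin{equation*}
\Bigl(\tfrac{2DL}{\varepsilon-\delta}\Bigr)^r\exp\!\Bigl(-\tfrac{N(\varepsilon-\delta)^2}{c\sigma^2}\Bigr)\le\alpha,
\end{equation*}
which, solved for $N$, reproduces \eqref{sample size}.

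The main obstacle is pinning down the explicit constant $12$: it requires carefully aligning the mesh factor chosen in the covering step (which determines how much of the slack $\varepsilon-\delta$ is spent on discretisation) with the variance constant in the moderate-deviation inequality, and it depends on using a genuinely variance-based tail bound (as opposed to a Hoeffding bound depending on $\|F\|_\infty$) so that only $\sigma^2$, not a possibly much larger range of $F$, appears in the final estimate. All other steps are routine once the covering number and the one-sided concentration inequality are in place.
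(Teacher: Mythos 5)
The paper does not prove this theorem at all: it is imported verbatim from the cited reference \cite{ruszczynski2003stochastic}, so there is no in-paper argument to compare against. Your outline is, however, essentially the proof given in that source: contrapose the inclusion $\hat S_N^\delta\subset S^\varepsilon$ into a one-sided uniform deviation event for the centred differences $\hat f_N(x^*)-\hat f_N(x)-[f(x^*)-f(x)]$ over $x\in\mathcal X\setminus S^\varepsilon$, reduce to a $\nu$-net of cardinality at most $(2DL/(\varepsilon-\delta))^r$ using the Lipschitz property, apply a pointwise exponential tail bound, and finish with a union bound; solving the resulting inequality for $N$ gives \eqref{sample size}. Your remark that the Lipschitz assumption must also hold for $F(\cdot,\xi)$ pathwise (not just for $f$) in order to control the discretisation error on the empirical side is correct and is indeed an assumption made in the source.

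The one step that does not hold as you have written it is the concentration inequality. A bound $\operatorname{\mathbb{V}ar}[F(x^*,\xi)-F(x,\xi)]\le\sigma^2$ by itself yields only Chebyshev-type (polynomial) tails, and Bernstein/Cram\'er-type bounds additionally require a range or sub-exponential condition, so the estimate $\mathbb P(\cdot>\tfrac{\varepsilon-\delta}{2})\le\exp(-N(\varepsilon-\delta)^2/(c\sigma^2))$ cannot be derived from the variance alone. In the cited reference the actual hypothesis is that the moment generating function of the centred difference satisfies $M_{x}(t)\le\exp(\sigma^2t^2/2)$ for all $t$ (a sub-Gaussian condition), with $\sigma^2$ then \emph{interpreted} as the maximal variance in \eqref{variance}; under that hypothesis your net-plus-union-bound computation goes through and, with the mesh chosen as you propose, produces the explicit constant $12$. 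So the gap is not in the architecture of your argument but in the tail bound you invoke: you need to strengthen the variance assumption to the sub-Gaussian MGF condition (as the original theorem implicitly does) for the exponential decay, and hence the logarithmic dependence on the covering number, to be legitimate.
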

Thus, using the Monte Carlo SAA method, the stochastic optimization problem \eqref{stochastic problem} is approximated by the deterministic problem \eqref{sampled optimization problem}.

\subsection{Convexity of cost functions}
We show that cost functions in \eqref{stochastic problem individual users} and \eqref{stochastic problem users part}, \eqref{stochastic problem manager part} are all convex with respect to their decision variables, under reasonable conditions.

\begin{lemma}
The cost function $J_{i} (a_i, C_i)$ in \eqref{stochastic problem individual users} is convex with respect to $a_i$, $C_i$.
\end{lemma}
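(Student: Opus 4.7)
The plan is to decompose $J_i(a_i, C_i)$ into its three summands and verify convexity of each. The investment cost $J_{\text{int}, i}(a_i, C_i) = a_i \pi_{\text{PV}} + C_i \pi_{\text{B}}$ is affine in $(a_i, C_i)$ and hence trivially convex. The remaining $2T$ summands are of the form (a nonnegative price) times the expectation of a pointwise maximum, so I would argue convexity at the level of a single realization of the random vector and then invoke linearity of expectation and summation.

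Concretely, for a fixed realization of $(X_i(t), Y_i(t), \beta_i(t))$, the inner argument of the $\max$ in \eqref{eq: cost of gas individual}, namely $X_i(t) - a_i Y_i(t) - \beta_i(t) C_i$, is affine in $(a_i, C_i)$; likewise the inner argument of the $\max$ in \eqref{eq: cost of rev individual} is affine in $(a_i, C_i)$. Since $z \mapsto \max\{z, 0\}$ is a convex function on $\mathbb{R}$, and convexity is preserved by composition with an affine map, both integrands are convex in $(a_i, C_i)$ almost surely. I would then use the standard fact that if $F(\cdot, \xi)$ is convex for almost every $\xi$ and $\mathbb{E}[F(x,\xi)]$ is well-defined and finite (as already assumed in the paragraph preceding \eqref{stochastic problem}), then $x \mapsto \mathbb{E}[F(x,\xi)]$ is convex; this follows by applying Jensen's inequality to convex combinations and Fubini/linearity of expectation. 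Multiplying by the nonnegative prices $\pi_{\text{gas}}, \pi_{\text{rev}} \geq 0$ preserves convexity, and finally the sum over $t = 1, \ldots, T$ of convex functions is convex.

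Putting these pieces together yields convexity of $J_i$ on the domain of interest. I expect no genuine obstacle: the only assumption I need to state explicitly is that the prices $\pi_{\text{PV}}, \pi_{\text{B}}, \pi_{\text{gas}}, \pi_{\text{rev}}$ are nonnegative (which is natural in the present economic setting) and that the expectations exist, which is already built into the problem formulation. The proof is essentially a chain of standard convexity-preserving operations (affine precomposition, $\max$ with a constant, expectation, nonnegative scaling, finite summation), and it can be presented in a few lines once the decomposition of $J_i$ is written out.
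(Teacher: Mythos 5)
Your proposal is correct and follows essentially the same route as the paper's proof, which likewise invokes preservation of convexity under affine precomposition, pointwise maximum, expectation, and nonnegative weighted sums (citing Boyd--Vandenberghe); you simply spell out the chain of operations in more detail than the paper does.
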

\begin{proof}
The pointwise maximum and the nonnegative weight sum (as well as the expectation) of functions preserves convexity \cite[p.~79]{boyd2004convex}. Thus, the $J_{\text{s}, i}$ in \eqref{stochastic problem individual users} is convex with respect to both $a_i$ and $C_i$.
\end{proof}
\begin{lemma}\label{lem:convexity of J_a}
The cost function $J_{i}$ in \eqref{stochastic problem users part} is convex with respect to $a_i$, if $\pi_{\text{gas}} \geq \pi_{\text{out}} \geq \pi_{\text{in}}$.
\end{lemma}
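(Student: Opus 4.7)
The plan is to show convexity in $a_i$ by combining the four $a_i$-dependent pieces into a single piecewise-linear function of the scalar $u := a_i Y_i(t) - X_i(t)$ (for each realization) and then using the standard fact that convexity is preserved by composition with an affine map, by expectation, and by finite sums. The capital cost $a_i\pi_{\text{PV}}$ is linear, so it can be ignored; the nontrivial part is the integrand $H(u;C_{\text{a},i}) := -\pi_{\text{in}} u^+ + \pi_{\text{out}} \min\{(-u)^+, C_{\text{a},i}\} + \pi_{\text{gas}}(-u-C_{\text{a},i})^+$ appearing inside the expectations of $J_{\text{s},i}+J_{\text{p},i}+J_{\text{gas},i}$.

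The key step I would carry out first is a case analysis in $u$ to rewrite $H(\cdot;C_{\text{a},i})$ explicitly as a continuous piecewise-linear function with three pieces: on $u\geq 0$ it equals $-\pi_{\text{in}} u$; on $-C_{\text{a},i}\leq u<0$ it equals $-\pi_{\text{out}} u$; and on $u<-C_{\text{a},i}$ it equals $-\pi_{\text{gas}} u - (\pi_{\text{gas}}-\pi_{\text{out}})C_{\text{a},i}$. I would quickly verify continuity at the breakpoints $u=0$ and $u=-C_{\text{a},i}$ (both sides agree, giving $0$ and $\pi_{\text{out}}C_{\text{a},i}$ respectively). Then convexity in $u$ reduces to checking that the slopes $-\pi_{\text{gas}},\,-\pi_{\text{out}},\,-\pi_{\text{in}}$ are nondecreasing as $u$ increases, which is precisely the hypothesis $\pi_{\text{gas}}\geq\pi_{\text{out}}\geq\pi_{\text{in}}$.

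Once convexity of $u\mapsto H(u;C_{\text{a},i})$ is established, I would invoke the standard results cited in the previous lemma (composition with the affine map $a_i\mapsto a_iY_i(t)-X_i(t)$ preserves convexity; expectation of a convex function is convex; nonnegative sums of convex functions are convex; see \cite[p.~79]{boyd2004convex}) to conclude that each summand $\mathbb{E}[H(a_iY_i(t)-X_i(t);C_{\text{a},i})]$ is convex in $a_i$. Adding the affine $J_{\text{intp}}$ finishes the proof.

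The main obstacle is that $J_{\text{p},i}$ contains a $\min$ operation and $J_{\text{s},i}$ enters with a negative sign, so neither term is individually convex in $a_i$; convexity only emerges after they are amalgamated with $J_{\text{gas},i}$, and it is exactly the price ordering $\pi_{\text{gas}}\geq\pi_{\text{out}}\geq\pi_{\text{in}}$ that makes the resulting piecewise-linear function have a nondecreasing slope profile. I do not expect any measurability or integrability subtleties, since $H$ is bounded on bounded regions and the random variables $X_i,Y_i$ have been assumed to have well-defined distributions with finite expectations throughout the paper.
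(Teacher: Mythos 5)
Your proposal is correct and follows essentially the same route as the paper: both reduce the claim to showing that the combined integrand $J_{\text{s},i}+J_{\text{p},i}+J_{\text{gas},i}$ is, for each realization, a continuous piecewise-linear function with slopes ordered by $\pi_{\text{gas}}\geq\pi_{\text{out}}\geq\pi_{\text{in}}$, and then invoke preservation of convexity under expectation and nonnegative sums. Your only (minor, beneficial) deviation is working in the variable $u=a_iY_i(t)-X_i(t)$ and precomposing with an affine map, which cleanly avoids the division by $Y_i(t)$ appearing in the paper's breakpoints.
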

\begin{proof}
Since $J_{\text{intp}}$ is linear, we only need to consider the second term on the right-hand side of \eqref{stochastic problem users part}.
Define $g_i (a_i, X_i (t), Y_i (t))$ as 
\begin{align*}
& g_i (a_i, X_i (t), Y_i (t))\\
 =&  \pi_{\text{in}} \max \left\{ a_i Y_i (t) - X_i (t), 0\right\}\\
& +  \pi_{\text{out}}\min \left\{ \max \left\{ X_i (t) - a_i Y_i (t), 0 \right\}, C_{\text{a}, i} \right\}\\
& +\pi_{\text{gas} } \max \left\{ X_i (t) - a_i Y_i (t) - C_{\text{a}, i}, 0 \right\}.
\end{align*}
Then, $J_{\text{s}, i} (t) + J_{\text{p}, i} (t) + J_{\text{gas}, i} (t) = \mathbb{E} \left[ g_i (a_i, X_i (t), Y_i (t))\right]$.
We can observe that $g_i (a_i, X_i (t), Y_i (t) )$ is a piecewise linear function which can be rewritten as 
\begin{align*}
g_i  \hspace{-1mm} = \hspace{-1mm}
\begin{cases}
\pi_{\text{in}} (X_i - a_i Y_i), ~ X_i/Y_i \leq a_i\\
\pi_{\text{out}} (X_i - a_i Y_i),   ~ \frac{X_i - C_{\text{a}, i} }{Y_i} \leq a_i \leq X_i / Y_i\\
 \pi_{\text{out}} C_{\text{a}, i}  + 
\pi_{\text{gas}} (X_i - a_i Y_i - C_{\text{a}, i} ), ~ 0 \leq a_i \leq \frac{X_i - C_{\text{a}, i} }{Y_i}
\end{cases}
\end{align*}
which is shown in \cref{fig: function of gi}.
\begin{figure}[htbp]
\centering
\includegraphics[width = 1\linewidth]{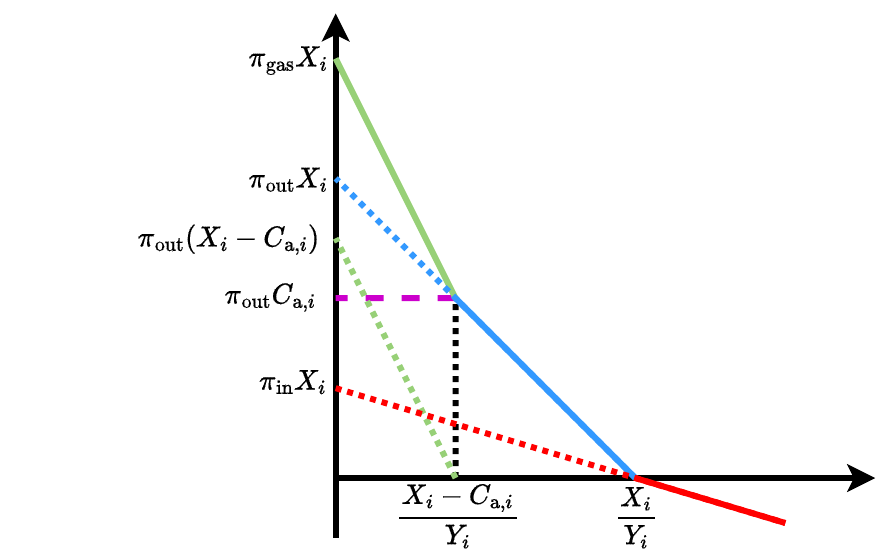}
\caption{Function $g_i(a_u, X_i (t), Y_i (t))$ with respect to $a_i$ when $\pi_{\text{gas}} \geq \pi_{\text{out}} \geq \pi_{\text{in}} > 0$. When $\pi_{\text{in}} \leq 0$, the piece on the right will have a non-negative slope, which does not affect its convexity.}
\label{fig: function of gi}
\end{figure}
We can observe from \cref{fig: function of gi} that when $ \pi_{\text{gas}} \geq \pi_{\text{out}} \geq \pi_{\text{in}}$ holds, the epigraph of $g_i$ is convex and thus $g_i (t)$ is convex with respect to $a_i$.
Next, since the non-negative weight sum of functions preserves convexity \cite[p.~79]{boyd2004convex}, we have that the expectation $J_{\text{s}, i} (t) + J_{\text{p}, i} (t) + J_{\text{gas}, i} (t)$ is convex and the sum of it over $T$ days is also convex with respect to $a_i$. 
\end{proof}

\begin{lemma}
The cost function $J_{\text{a}}$ in \eqref{stochastic problem manager part} is convex with respect to $\left( C_{\text{a},1}, \ldots, C_{\text{a}, n} \right)$ if $\pi_{\text{out}} \geq \pi_{\text{grid}}$.
\end{lemma}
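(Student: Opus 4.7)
The plan is to show convexity of each term in $J_{\text{a}}$ with respect to $(C_{\text{a},1},\ldots,C_{\text{a},n})$, combining $J_{\text{ex}}$ and $J_{\text{grid}}$ when needed to overcome the apparent non-convexity of $J_{\text{grid}}$ alone. Two preparatory observations are used throughout. First, $P_{\text{in}}(a,t)=\sum_i \max\{a_iY_i(t)-X_i(t),0\}$ does not depend on $C_{\text{a}}$ at all. Second, for every realization of $X_i(t)$ and $Y_i(t)$ each summand $\min\{\max\{X_i(t)-a_iY_i(t),0\},C_{\text{a},i}\}$ is the pointwise minimum of two affine functions of $C_{\text{a},i}$ and is therefore concave in $C_{\text{a},i}$; consequently $P_{\text{out}}(a,C_{\text{a}},t)$ is concave in $(C_{\text{a},1},\ldots,C_{\text{a},n})$. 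Note also that $C_{\text{a}}=\sum_i C_{\text{a},i}$ is linear in the decision variables.

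Given these facts, $J_{\text{intb}}$ is linear hence convex, and $J_{\text{rev}}(t)$ is convex with no additional hypothesis: its inner argument $P_{\text{in}}(t)-P_{\text{out}}(t)-(1-\beta_{\text{a}}(t))C_{\text{a}}$ is the sum of a constant, a convex function (namely $-P_{\text{out}}$) and an affine term; the pointwise maximum with $0$ preserves convexity, and so do expectation and summation over $t$.

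The substantive step, where the price condition enters, is the joint treatment of $J_{\text{ex}}(t)$ and $J_{\text{grid}}(t)$. Since $\pi_{\text{in}}P_{\text{in}}(t)$ is constant in $C_{\text{a}}$, it suffices to handle
\[
h := -\pi_{\text{out}}P_{\text{out}}(t)+\pi_{\text{grid}}\max\bigl\{P_{\text{out}}(t)-P_{\text{in}}(t)-\beta_{\text{a}}(t)C_{\text{a}},\,0\bigr\}.
\]
Absorbing $-\pi_{\text{out}}P_{\text{out}}$ inside the maximum I would rewrite this as
\[
h=\max\bigl\{-\pi_{\text{out}}P_{\text{out}},\ (\pi_{\text{grid}}-\pi_{\text{out}})P_{\text{out}}-\pi_{\text{grid}}\bigl(P_{\text{in}}+\beta_{\text{a}}(t)C_{\text{a}}\bigr)\bigr\}.
\]
Because $\pi_{\text{out}}\geq \pi_{\text{grid}}\geq 0$, both coefficients $-\pi_{\text{out}}$ and $\pi_{\text{grid}}-\pi_{\text{out}}$ multiplying the concave function $P_{\text{out}}$ are nonpositive, so each argument of the maximum is a nonpositive scalar times a concave function plus an affine function, hence convex. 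The pointwise maximum of convex functions is convex, and expectation together with summation over $t$ preserves convexity. Adding the already-handled convex terms $J_{\text{intb}}$ and $\sum_t J_{\text{rev}}(t)$ then yields convexity of $J_{\text{a}}$.

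The main obstacle is precisely the combined rewriting of $J_{\text{ex}}$ and $J_{\text{grid}}$. Taken in isolation, $J_{\text{grid}}$ is not convex, because the argument of its outer maximum is concave in $C_{\text{a}}$; the key insight is that the offending concave contribution $\pi_{\text{grid}}P_{\text{out}}$ can be absorbed into the convex term $-\pi_{\text{out}}P_{\text{out}}$ coming from $J_{\text{ex}}$, and the hypothesis $\pi_{\text{out}}\geq \pi_{\text{grid}}$ is exactly what is needed for the resulting coefficient on $P_{\text{out}}$ to keep the nonpositive sign that preserves convexity.
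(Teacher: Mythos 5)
Your proof is correct, but it reaches the conclusion by a different route than the paper. The paper treats $J(t)=J_{\text{ex}}(t)+J_{\text{grid}}(t)+J_{\text{rev}}(t)$ as a single function of the intermediate variables $(-P_{\text{out}}(t),-C_{\text{a}})$, argues that it is convex and (using $\pi_{\text{out}}\geq\pi_{\text{grid}}$) monotonically non-decreasing in each of these arguments, notes that $-P_{\text{out}}(t)$ and $-C_{\text{a}}$ are themselves convex in $(C_{\text{a},1},\ldots,C_{\text{a},n})$, and concludes by the vector composition rule of Boyd and Vandenberghe. You instead split the terms differently: you dispose of $J_{\text{rev}}$ on its own (correctly observing that no price condition is needed there), and you absorb the troublesome $+\pi_{\text{grid}}P_{\text{out}}$ contribution of $J_{\text{grid}}$ into the $-\pi_{\text{out}}P_{\text{out}}$ term of $J_{\text{ex}}$ by rewriting their sum as an explicit pointwise maximum of two functions, each a nonpositive multiple of the concave $P_{\text{out}}$ plus an affine term. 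Both arguments rest on the same two facts --- $P_{\text{out}}$ is concave in the capacities, and $\pi_{\text{out}}\geq\pi_{\text{grid}}$ keeps the net coefficient on $P_{\text{out}}$ nonpositive along every branch --- but your version is more elementary (it avoids the extended-value monotone composition rule), makes explicit that $J_{\text{grid}}$ alone can fail to be convex and that the price condition is needed only for the $J_{\text{ex}}+J_{\text{grid}}$ pairing, while the paper's version is more compact and generalizes more readily if further terms depending on $P_{\text{out}}$ were added. One small caveat: your rewriting pulls $\pi_{\text{grid}}$ inside the maximum and discards the sign of $-\pi_{\text{out}}$, so it implicitly uses $\pi_{\text{out}}\geq\pi_{\text{grid}}\geq 0$; the paper makes the same implicit nonnegativity assumption on prices, so this is not a substantive gap.
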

\begin{proof}
Let us focus on the term $J(t) = J_{\text{ex}} (t) + J_{\text{grid}} (t) + J_{\text{rev}} (t)$ in \eqref{stochastic problem manager part}, since $J_{\text{intb}}$ is linear with respect to $\left( C_{\text{a},1}, \ldots, C_{\text{a}, n} \right)$.
$J(t)$ is convex with respect to $(-P_{\text{out}} (t), - C_{\text{a}} (t) )$, as the pointwise maximum and expectation preserve convexity. Meanwhile, if $\pi_{\text{out}} \geq \pi_{\text{grid}}$, we can obtain that $J(t)$ is a monotonically non-decreasing function in each argument, i.e., $-P_{\text{out}} (t)$, and $- C_{\text{a}} (t)$, respectively. Meanwhile, $-P_{\text{out}} (t)$, $- C_{\text{a}} (t)$ are convex with respect to $\left( C_{\text{a},1}, \ldots, C_{\text{a}, n} \right)$. Following from the composition rule \cite[p.~86]{boyd2004convex}, $J(t)$ is convex with respect to $\left( C_{\text{a},1}, \ldots, C_{\text{a}, n} \right)$. Therefore, we have that $J_{\text{a}}$ in \eqref{stochastic problem manager part} is convex. 
\end{proof}

\begin{remark}
The condition $\pi_{\text{gas}} \geq \pi_{\text{out}} \geq \pi_{\text{in}}$ is a reasonable constraint to impose. The first inequality ensures that the user will first purchase electricity from the manager at the price $\pi_{\text{out}}$, and only turn to the fuel cell for power if there is still a shortage, at a higher price $\pi_{\text{gas}}$. The second inequality prevents users from making a profit by engaging in electricity trading, where they buy electricity from the manager at a lower price and sell it back at a higher price. Overall, these constraints promote efficient electricity consumption and discourage wasteful behavior. \ml{The condition $\pi_{\text{out}} \geq \pi_{\text{grid}}$ is also reasonable for the manager as it will not lose money during the process of transmitting power from the grid to users.}
\end{remark}


\begin{remark}
\ml{Since the cost functions in \eqref{stochastic problem users part} and \eqref{stochastic problem manager part} are convex, there exists a Nash equilibrium for the non-cooperative game \cite{facchinei2010generalized}, that is, at the equilibrium, the users and the manager cannot decrease their cost function any further by changing their investments unilaterally.}
\end{remark}

\subsection{Application of the SAA approach}
In this subsection, we discuss the application of the Monte Carlo SAA approach on the formulated stochastic optimization problems to obtain approximated deterministic optimization problems.
To evaluate sample sizes from \Cref{thm: estimating size}, we need to compute the diameters $D$, Lipschitz constants $L$,  and variances $\sigma$ of the corresponding cost functions. The following lemmas provide an estimation of the sample sizes for \eqref{stochastic problem individual users}, \eqref{stochastic problem users part} and \eqref{stochastic problem manager part}, respectively.

\begin{lemma}\label{lem: sample size for scenario one}
Let $a_i \in [0, \bar{a}_i]$, $C_i \in [0, \bar{C}_i]$ for problem \eqref{stochastic problem individual users}, \ml{where $\bar{a}_i, \bar{C}_i$ are the upper bounds for $a_i$ and $C_i$, respectively.} Then, the sampling size $N$ for problem \eqref{stochastic problem individual users} for given $\varepsilon > \delta \geq 0$ and $\alpha > 0$ in \eqref{Confidence interval} should satisfy \eqref{sample size}
with $D = \max \{\bar{a}_i, \bar{C}_i \}$, 
\begin{align*}
L = & \max \left\{ \pi_{\text{PV}} + \max \left\{ \pi_{\text{gas}},  \pi_{\text{rev}}\right\} \sum_{t = 1}^{T} \mathbb{E} \left[Y_i (t)\right],  \right. \\
& \qquad \quad \left. \pi_{\text{B}} + \sum_{t = 1}^{T} \max \left\{ \pi_{\text{gas}} \beta_i (t) , \pi_{\text{rev}}(1 - \beta_i (t)) \right\} \right\}
\end{align*}
and
\begin{align*}
\sigma^2 \leq & D^2
\left( \max \left\{ \max \left\{ \pi_{\text{gas}}, \pi_{\text{rev}} \right\}  \sum_{t = 1}^{T}  \mathbb{E} [Y_i (t)], \right. \right.\\
 & \qquad \left. \left. \sum_{i = 1}^{T} \max \left\{ \pi_{\text{gas}} \beta_i (t), \pi_{\text{rev}} (1 - \beta_i (t)) \right\} \right\} \right)^2.
\end{align*}
\end{lemma}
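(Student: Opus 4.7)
The plan is to verify the three ingredients $D$, $L$, and $\sigma^{2}$ appearing in \Cref{thm: estimating size} separately, with $r=2$ corresponding to the decision variable $x=(a_{i},C_{i})$, and then substitute into \eqref{sample size}. The diameter is immediate: since $\mathcal{X}=[0,\bar{a}_{i}]\times[0,\bar{C}_{i}]$ and $\|\cdot\|$ is the max-norm, $D=\sup_{x,y\in\mathcal{X}}\|x-y\|=\max\{\bar{a}_{i},\bar{C}_{i}\}$ by inspection.

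For the Lipschitz constant $L$, I would split $J_{i}=J_{\text{int},i}+\sum_{t=1}^{T}(J_{\text{gas},i}+J_{\text{rev},i})$ and bound the partial (sub)derivatives in each of the two decision coordinates. The investment piece contributes $\pi_{\text{PV}}$ to the $a_{i}$-slope and $\pi_{\text{B}}$ to the $C_{i}$-slope. For the two $\max\{\cdot,0\}$ terms, the key observation is that the events $\{X_{i}(t)-a_{i}Y_{i}(t)-\beta_{i}(t)C_{i}>0\}$ (where $J_{\text{gas},i}$ is active) and $\{a_{i}Y_{i}(t)-X_{i}(t)-(1-\beta_{i}(t))C_{i}>0\}$ (where $J_{\text{rev},i}$ is active) are disjoint, so at any sample point at most one of the two terms has a nonzero subgradient. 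This yields the pointwise bounds $\max\{\pi_{\text{gas}},\pi_{\text{rev}}\}Y_{i}(t)$ in the $a_{i}$-direction and $\max\{\pi_{\text{gas}}\beta_{i}(t),\pi_{\text{rev}}(1-\beta_{i}(t))\}$ in the $C_{i}$-direction. Pushing the expectation inside, summing over $t$, adding the investment contributions, and then taking the worst of the two coordinate-wise Lipschitz constants reproduces the stated $L$.

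For $\sigma^{2}$, I would set $Z=F(x^{*},\xi)-F(x,\xi)$ and use $\operatorname{\mathbb{V}ar}[Z]\le \mathbb{E}[Z^{2}]$. The linear investment contributions to $Z$ are non-random and so do not contribute to the variance; only the stochastic max-terms matter. Applying the same subgradient estimates as above, but with the sample realizations $Y_{i}(t)$, $\beta_{i}(t)$ in place of their expectations, and combining with $\|x-x^{*}\|\le D$, produces a bound of the claimed shape $D^{2}\cdot(\cdot)^{2}$.

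The main obstacle will be this last step: to arrive at $\mathbb{E}[Y_{i}(t)]$ inside the variance bound rather than a pathwise second-moment quantity, one has to either exploit a further structural assumption on $Y_{i}(t)$ or tighten the argument along the active region of each max-term. I would expect to invoke the same disjoint-support observation from the Lipschitz step, so that the squared sum across $t$ collapses to at most one active summand per sample path, which is what brings the bound into the form asserted in the lemma. Once $D$, $L$, and $\sigma^{2}$ are in hand, substituting into \eqref{sample size} of \Cref{thm: estimating size} yields the claimed sample size for problem \eqref{stochastic problem individual users}.
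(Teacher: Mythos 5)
Your proposal follows essentially the same route as the paper's proof, which is considerably terser: the paper obtains $D$ by inspection of the box domain, justifies $L$ with the one-line remark that $J_i$ is piecewise linear and its Lipschitz constant is bounded by the largest constant among the pieces (your coordinate-wise subgradient bookkeeping, including the observation that the gas-shortage and reverse-flow events are disjoint for $C_i\geq 0$, is a more explicit version of exactly this), and bounds $\sigma^2$ via $\operatorname{\mathbb{V}ar}[F(X)+b]=\operatorname{\mathbb{V}ar}[F(X)]$ and $\operatorname{\mathbb{V}ar}[F(X)]\leq\mathbb{E}[F(X)^2]$, precisely as you do. The one place you stop short --- how the pathwise bound $|Z|\leq D\cdot\bigl(\max\{\pi_{\text{gas}},\pi_{\text{rev}}\}\sum_t Y_i(t)+\cdots\bigr)$ turns into a bound featuring $\sum_t\mathbb{E}[Y_i(t)]$ rather than a second moment of $\sum_t Y_i(t)$ --- is a legitimate gap, but it is not one the paper closes either: its proof simply asserts the displayed inequality for $\sigma^2$ after citing the two variance properties, and since $\bigl(\sum_t\mathbb{E}[Y_i(t)]\bigr)^2\leq\mathbb{E}\bigl[\bigl(\sum_t Y_i(t)\bigr)^2\bigr]$ by Jensen, the stated bound is strictly stronger than what the pathwise argument delivers. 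So you should not expect a further structural trick to be hiding in the paper; if you want a rigorous statement you can either replace $\sum_t\mathbb{E}[Y_i(t)]$ by the corresponding second-moment quantity, or note (as the paper concedes in Section 4) that these are only conservative, order-of-magnitude estimates of the required sample size.
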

\begin{proof}
Since $a_i \in [0, \bar{a}_i]$, $C_i \in [0, \bar{C}_i]$, we have that the diameter of the domain is $D = \max \{\bar{a}_i, \bar{C}_i \}$.
Denote $(\tilde{a}_i, \tilde{C}_i)$ as any point in the domain. 
Recall that $J_i (a_i, C_i)$ is a piecewise linear function, 
\ml{and the Lipschitz constant of a piecewise linear function is bounded by the largest Lipschitz constant among all pieces.}
Therefore, we have 
$$\left\| J_i (a_i, C_i) - J_i (\tilde{a}_i, \tilde{C}_i) \right\| \leq L \left\| \begin{bmatrix}
a_i - \tilde{a}_i\\
C_i - \tilde{C}_i
\end{bmatrix} \right\|, $$ where 
\begin{align*}
L = & \max \left\{ \pi_{\text{PV}} + \max \left\{ \pi_{\text{gas}},  \pi_{\text{rev}}\right\} \sum_{t = 1}^{T} \mathbb{E} \left[Y_i (t)\right],  \right. \\
& \qquad \quad \left. \pi_{\text{B}} + \sum_{t = 1}^{T} \max \left\{ \pi_{\text{gas}} \beta_i (t) , \pi_{\text{rev}}(1 - \beta_i (t)) \right\} \right\}.
\end{align*}
Next, recall the property of variance that $\operatorname{\mathbb{V}ar} [F(X) + b] = \operatorname{\mathbb{V}ar} [F(X)]$ for any constant $b$, and $\operatorname{\mathbb{V}ar} [F(X)] \leq \operatorname{\mathbb{E}} [F(X)^2]$. We can obtain from \eqref{variance} that 
\begin{align*}
\sigma^2 \leq & D^2
\left( \max \left\{ \max \left\{ \pi_{\text{gas}}, \pi_{\text{rev}} \right\}  \sum_{t = 1}^{T}  \mathbb{E} [Y_i (t)], \right. \right.\\
 & \qquad \left. \left. \sum_{i = 1}^{T} \max \left\{ \pi_{\text{gas}} \beta_i (t), \pi_{\text{rev}} (1 - \beta_i (t)) \right\} \right\} \right)^2.
\end{align*}
This completes the proof. 
\end{proof}

\begin{lemma}\label{lem: sample size for scenario two a}
Let $a_i \in [0, \bar{a}_i]$ for problem \eqref{stochastic problem users part}, then the sampling size for problem \eqref{stochastic problem individual users} for given $\varepsilon > \delta \geq 0$ and $\alpha > 0$ in \eqref{Confidence interval} should satisfy \eqref{sample size} with $D = \bar{a}_i$,
$
L = \pi_{\text{PV}} + \pi_{\text{gas}}  \sum_{t = 1}^{T}  \mathbb{E} [Y_i (t)],
$
and 
\begin{align*}
\sigma^2 = D^2\pi_{\text{gas}}^2 \left( \sum_{t = 1}^{T} \mathbb{E} [Y_i (t)] \right)^2.
\end{align*}
\end{lemma}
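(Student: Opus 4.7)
The plan is to follow the proof of \Cref{lem: sample size for scenario one} step by step, specialized to the one-dimensional decision variable $a_i$ in \eqref{stochastic problem users part}. Since $a_i \in [0, \bar a_i]$ and the max-norm on an interval is just absolute value, the diameter is immediately $D = \bar a_i$.

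For the Lipschitz constant I would exploit the explicit piecewise-linear representation of the per-day random cost $g_i(a_i, X_i(t), Y_i(t))$ already derived in the proof of \Cref{lem:convexity of J_a}. Its three pieces have slopes in $a_i$ of magnitude $\pi_{\text{in}} Y_i(t)$, $\pi_{\text{out}} Y_i(t)$, and $\pi_{\text{gas}} Y_i(t)$ respectively; under the pricing assumption $\pi_{\text{gas}} \geq \pi_{\text{out}} \geq \pi_{\text{in}}$ the dominant magnitude is $\pi_{\text{gas}} Y_i(t)$. Invoking the standard fact that the Lipschitz constant of a piecewise linear function equals the largest slope magnitude among its pieces, one obtains pointwise in the realizations
\begin{align*}
\left| g_i(a_i, X_i(t), Y_i(t)) - g_i(\tilde a_i, X_i(t), Y_i(t)) \right| \leq \pi_{\text{gas}} Y_i(t)\, |a_i - \tilde a_i|.
\end{align*}
Taking expectations, summing over $t = 1,\ldots,T$, and adding the deterministic contribution $\pi_{\text{PV}}|a_i - \tilde a_i|$ from $J_{\text{intp}}$ then produces exactly $L = \pi_{\text{PV}} + \pi_{\text{gas}} \sum_{t=1}^{T} \mathbb{E}[Y_i(t)]$.

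For the variance I would apply the bound $\operatorname{\mathbb{V}ar}[F(a_i^*, \xi) - F(a_i, \xi)] \leq \mathbb{E}[(F(a_i^*, \xi) - F(a_i, \xi))^2]$, drop the deterministic term $J_{\text{intp}}$ (which contributes zero variance), reuse the per-piece slope bound $\pi_{\text{gas}} Y_i(t)$ together with $|a_i^* - a_i| \leq D$, and then bring the expectation inside by linearity to reach $\sigma^2 \leq D^2 \pi_{\text{gas}}^2 \bigl(\sum_{t=1}^T \mathbb{E}[Y_i(t)]\bigr)^2$. Substituting $D$, $L$, and $\sigma^2$ into \Cref{thm: estimating size} delivers the sample-size bound.

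The main obstacle is making the worst-case slope argument fully rigorous across the three pieces and all realizations, in particular in the boundary case $\pi_{\text{in}} < 0$ flagged beneath \eqref{eq: cost in}: one must verify that $\pi_{\text{gas}} \geq |\pi_{\text{in}}|$ holds under the intended market interpretation so that the common bound $\pi_{\text{gas}} Y_i(t)$ still dominates the slope on the rightmost piece shown in \cref{fig: function of gi}. A secondary care point is that moving from $\mathbb{E}[(\sum_t (\cdot))^2]$ to $\bigl(\sum_t \mathbb{E}[Y_i(t)]\bigr)^2$ amounts to using the envelope $|a_i - \tilde a_i|\, \pi_{\text{gas}} \sum_t Y_i(t)$ as a pathwise bound and then exploiting the linearity of $\mathbb{E}$ on the product $|a_i^*-a_i|^2 \pi_{\text{gas}}^2$, parallel to the corresponding step in \Cref{lem: sample size for scenario one}.
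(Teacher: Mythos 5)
Your proposal is correct and follows essentially the same route as the paper's own (much terser) proof: both read the worst-case slope $\pi_{\text{gas}} Y_i(t)$ off the piecewise-linear function $g_i$ in \cref{fig: function of gi}, add the deterministic $\pi_{\text{PV}}$ contribution to get $L$, and obtain $\sigma^2$ by the same estimation used in \Cref{lem: sample size for scenario one}. Your added caveat about verifying $\pi_{\text{gas}} \geq |\pi_{\text{in}}|$ when $\pi_{\text{in}} < 0$ is a detail the paper glosses over, but it does not change the argument.
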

\begin{proof}
\ml{The largest Lipschitz constant of the cost function is bounded by $L = \pi_{\text{PV}} + \sum_{t = 1}^{T} \pi_{\text{gas}} \mathbb{E} \left[ Y_i (t)\right]$ following from \Cref{fig: function of gi} in the proof of \Cref{lem:convexity of J_a}.
Following an analogous estimation, we can obtain the variance. }
\end{proof}

\begin{lemma}\label{lem: sample size for scenario two Ca}
Let $C_{\text{a}, i} \in [0, \bar{C}_{\text{a}, i}]$, $\bar{C}_{\text{a}, i} \geq 0$, $i = 1, \ldots, n$, for problem \eqref{stochastic problem manager part}, then the sampling size for problem \eqref{stochastic problem manager part} for given $\varepsilon > \delta \geq 0$ and $\alpha > 0$ in \eqref{Confidence interval} should satisfy \eqref{sample size} with $D = \|\bar{C}_{\text{a}}\|$,
\begin{align*}
L =  \pi_{\text{B}} + \sum_{t = 1}^{T} \max \left\{ \pi_{\text{out}}, \pi_{\text{grid}} (1+\beta_{\text{a}} (t)), \pi_{\text{rev}} \beta_{\text{a}} (t) \right\}
\end{align*}
and
$$
\sigma^2 = D^2 \left( \sum_{t = 1}^{T} \max \left\{ \pi_{\text{out}}, \pi_{\text{grid}} (1+\beta_{\text{a}} (t)), \pi_{\text{rev}} \beta_{\text{a}} (t) \right\} \right)^2.
$$
\end{lemma}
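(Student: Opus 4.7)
I will follow the same three-step template as \Cref{lem: sample size for scenario one} and \Cref{lem: sample size for scenario two a}: identify the diameter $D$ of the feasible set, derive a Lipschitz constant $L$ from the piecewise-linear structure of $J_{\text{a}}$, bound the variance $\sigma^{2}$ in \eqref{variance} using $\operatorname{\mathbb{V}ar}[Z]\leq \mathbb{E}[Z^{2}]$ together with the same Lipschitz estimate, and then invoke \Cref{thm: estimating size}.

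Since $(C_{\text{a},1},\ldots,C_{\text{a},n})$ ranges over the product box $\prod_{i=1}^{n}[0,\bar{C}_{\text{a},i}]$, the diameter in the max-norm is $D=\max_{i}\bar{C}_{\text{a},i}=\|\bar{C}_{\text{a}}\|$. The capital-cost term $J_{\text{intb}}$ is linear with per-coordinate slope $\pi_{\text{B}}$. The remaining terms $J_{\text{ex}}(t)$, $J_{\text{grid}}(t)$, $J_{\text{rev}}(t)$ are expectations of piecewise-linear functions of $C_{\text{a}}$, and, as in the earlier lemmas, the Lipschitz constant of such a sum is bounded by the largest slope across all linear pieces. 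These slopes can be analyzed day by day.

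The central step is to track how each $C_{\text{a},i}$ enters the day-$t$ cost: through $P_{\text{out}}$ via the operation $\min\{\cdot,C_{\text{a},i}\}$ (which contributes a $\{0,1\}$-valued subgradient), through $\beta_{\text{a}}(t)\sum_{j}C_{\text{a},j}$ in the battery availability term of $J_{\text{grid}}$, and through $(1-\beta_{\text{a}}(t))\sum_{j}C_{\text{a},j}$ in the remaining-capacity term of $J_{\text{rev}}$. A case split on the sign of $P_{\text{out}}-P_{\text{in}}-\beta_{\text{a}}C_{\text{a}}$ shows that the outer $\max$ of $J_{\text{grid}}(t)$ and that of $J_{\text{rev}}(t)$ cannot both be active simultaneously, so their slope contributions combine as a maximum rather than a sum. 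Combining the $P_{\text{out}}$ subgradient with the explicit $\beta_{\text{a}}$ and $1-\beta_{\text{a}}$ coefficients in the active regime yields per-term bounds of $\pi_{\text{grid}}(1+\beta_{\text{a}}(t))$ and $\pi_{\text{rev}}\beta_{\text{a}}(t)$ respectively, while $J_{\text{ex}}$ contributes at most $\pi_{\text{out}}$. Taking the maximum at each $t$, summing over $t$, and adding $\pi_{\text{B}}$ from $J_{\text{intb}}$ gives the stated $L$.

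For the variance, the deterministic term $J_{\text{intb}}$ drops out since $\operatorname{\mathbb{V}ar}[Z+c]=\operatorname{\mathbb{V}ar}[Z]$. Applying $\operatorname{\mathbb{V}ar}[Z]\leq \mathbb{E}[Z^{2}]$ to $F(x^{*},\xi)-F(x,\xi)$ and substituting the realization-wise Lipschitz bound established in the previous step factors out $\|x^{*}-x\|^{2}\leq D^{2}$, leaving the squared sum over $t$ that appears in the statement. Plugging $D$, $L$ and $\sigma^{2}$ into \eqref{sample size} of \Cref{thm: estimating size} then delivers the claimed sample-size inequality. The main obstacle will be the per-day slope bookkeeping, and in particular making precise the case split that produces the asymmetric coefficients $1+\beta_{\text{a}}(t)$ in the grid term and $\beta_{\text{a}}(t)$ in the reverse-flow term; once that accounting is in place, the remaining steps are direct analogues of the proofs of the previous two lemmas.
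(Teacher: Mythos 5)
The paper itself gives no proof of this lemma---it is dispatched with ``analogous to the above lemmas and thus omitted''---so your template (diameter of the box in the max-norm, worst-case slope of a piecewise-linear cost, variance via $\operatorname{\mathbb{V}ar}[Z]\leq\mathbb{E}[Z^2]$, then \Cref{thm: estimating size}) is exactly the intended route, and your identifications of $D=\|\bar{C}_{\text{a}}\|$ and of the $\pi_{\text{B}}$ contribution from $J_{\text{intb}}$ are fine. However, there is a genuine gap at precisely the step you defer as ``the main obstacle'': the slope accounting you describe does not produce the stated coefficient $\pi_{\text{rev}}\beta_{\text{a}}(t)$. Writing $s\in\{0,1\}$ for the subgradient of $P_{\text{out}}$ with respect to $C_{\text{a},i}$, the active branch of $J_{\text{grid}}(t)$ has slope $\pi_{\text{grid}}(s-\beta_{\text{a}}(t))$, whose magnitude the triangle inequality bounds by $\pi_{\text{grid}}(1+\beta_{\text{a}}(t))$---consistent with the statement---but the same computation for the active branch of $J_{\text{rev}}(t)$ gives slope $\pi_{\text{rev}}(-s-(1-\beta_{\text{a}}(t)))$, whose magnitude is between $\pi_{\text{rev}}(1-\beta_{\text{a}}(t))$ and $\pi_{\text{rev}}(2-\beta_{\text{a}}(t))$, not $\pi_{\text{rev}}\beta_{\text{a}}(t)$. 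You cannot argue $s=0$ in the reverse-flow regime ($P_{\text{in}}>P_{\text{out}}$ globally does not force every user's $\min\{\cdot,C_{\text{a},i}\}$ to be saturated), so the $1-\beta_{\text{a}}$ term survives. Your plan asserts the target constant without showing the cancellation that would be needed, and no such cancellation is available from the mechanism you describe.

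A second, smaller issue: your mutual-exclusivity argument for the two outer maxima is correct (both active would force $-C_{\text{a}}>0$), but $J_{\text{ex}}(t)$ is not inside any max and its slope $-\pi_{\text{out}}s$ involves the \emph{same} subgradient $s$ as the active branch of $J_{\text{grid}}$ or $J_{\text{rev}}$; within a single realization these contributions add rather than compete, so the per-day bound should be of the form $\max$ over the two regimes of the \emph{summed} slopes (e.g.\ $|(\pi_{\text{grid}}-\pi_{\text{out}})s-\pi_{\text{grid}}\beta_{\text{a}}|$ in the grid regime, where the hypothesis $\pi_{\text{out}}\geq\pi_{\text{grid}}$ from the convexity lemma is what keeps this below $\pi_{\text{out}}$), not a three-way maximum of the individual coefficients. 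To close the proof you would need either to carry out this additive bookkeeping and show it lands below the stated $L$, or to conclude that the stated constants hold only as the authors' conservative convention and flag the $\pi_{\text{rev}}\beta_{\text{a}}(t)$ term as irreconcilable with the analogous derivation.
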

The proof is analogous to the above lemmas and is thus omitted here.

\section{A case study}\label{Section A case study}
We implement the optimal investment strategy for PV and battery storage in the Jono neighborhood mentioned in \Cref{subsection jono}.
In this neighborhood, there are $140$ households and an operation period of $T = 334$ days is considered. The daily power generation and power consumption for the users are represented as random variables, generated based on their historical data (see, e.g., \Cref{fig:Power consumption and generation}). The price for PV is set to $\pi_{\text{PV}} = 2000 ~ \mathrm{[ \text{\textyen} /m^2 ]}$
and the price for battery storage is set to $\pi_{\text{B}} = 4500 ~ \mathrm{[ \text{\textyen}/kWh  ]}$. These prices are amortized over a ten-year lifespan.
We set the electricity price at $\pi_{\text{gas}} = 30 ~ \mathrm{[\text{\textyen} /kWh] }$, and the penalty for reverse power flow at $\pi_{\text{rev}} = 20 ~ \mathrm{[\text{\textyen} /kWh] }$. The monetary unit \text{\textyen} indicates JPY.
The charge rate of the battery is set to $\beta = 0.5$, ensuring it has the capacity to charge or discharge every day.

In the first scenario, users invest in PV panels and battery storage individually and we consider the problem represented by \eqref{stochastic problem individual users}. 
The stochastic problem is solved by applying the Monte Carlo approach introduced in \Cref{subsection: Monte Carlo sample average approximation}.
Figure \ref{fig:comparison between investment plans} depicts a comparison before and after the optimal planning and the optimal individual investment plans. It should be noted that no battery storage was utilized in the area before the implementation of the optimal planning. The results show that introducing battery storage allows users in the Jono neighborhood to invest less in PV panels while significantly reducing the annual cost.
\begin{figure}[htbp]
    \centering
\begin{subfigure}[b]{1\linewidth}
     \centering
     \includegraphics[width = 1\linewidth]{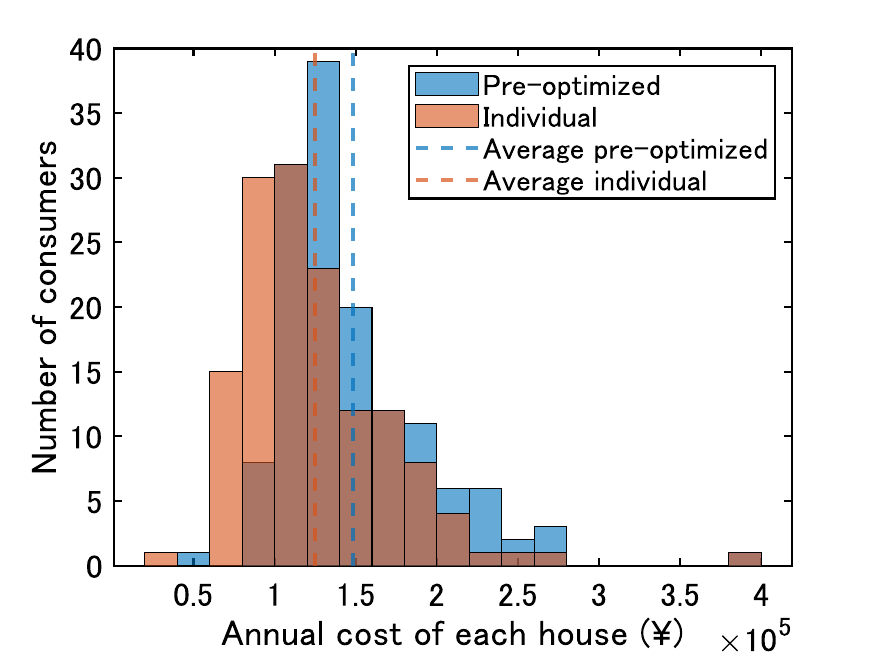}
     \caption{Comparison between the total annual cost before optimization and that after optimal individual investment for all users.}
     \label{fig:Investment plan cost comparison}
\end{subfigure}
\begin{subfigure}[b]{1\linewidth}
     \centering
     \includegraphics[width = 1\linewidth]{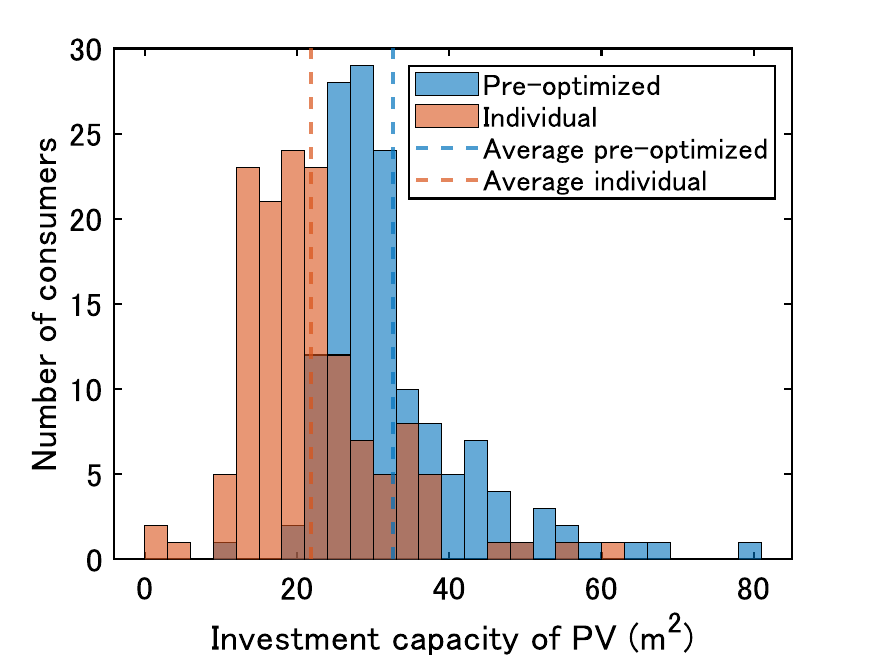}
     \caption{Comparison between the PV investment before optimization and that after optimal individual investment for all users.}
     \label{fig:Investment plan pv panels}
\end{subfigure}
\begin{subfigure}[b]{1\linewidth}
     \centering
     \includegraphics[width = 1\linewidth]{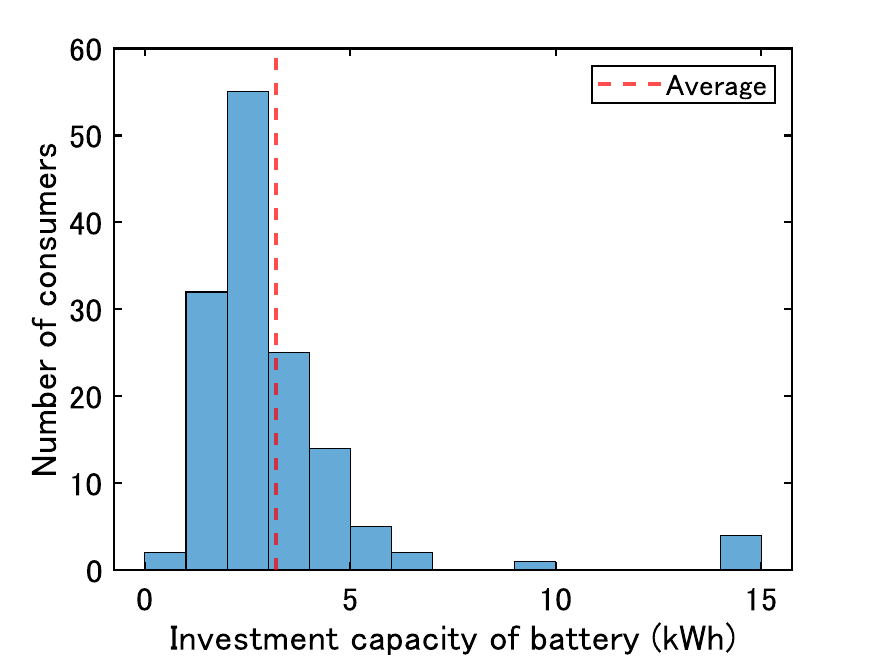}
     \caption{Optimal individual investment plan for battery storage for all residents.}
     \label{fig:Investment plan for battery individual users}
\end{subfigure}
    \caption{The investment plans for all users before and after optimal investment planning}
    \label{fig:comparison between investment plans}
\end{figure}

Next, we consider the joint investment strategy in storage.
First, let us examine the global optimization (\ref{eq:global_optimization_costfun})--(\ref{eq:global_optimization_Jrev}), assuming the same household data and equipment/electricity prices as in the individual case. It is important to note that this global optimization does not specify how the joint battery storage is distributed among individuals. Therefore, in the second scenario, we adopt a game setting.
The manager determines the allocation of battery capacity, calculated through the manager's optimization problem \eqref{stochastic problem manager part}, for all users. Each user then utilizes their respective allocation to decide on the invested area of PV panels.
We set the electricity price at $\pi_{\text{grid}} = 10 ~\mathrm{[\text{\textyen} /kWh] }$, and the penalty for reverse power flow at  $\pi_{\text{rev}} = 20 ~ \mathrm{[\text{\textyen} /kWh] }$.
The selling price by the manager is $\pi_{\text{out}} = 20 ~ \mathrm{[ \text{\textyen}/kWh]}$, and the gas price is $\pi_{\text{gas}} =  30 ~ \mathrm{[\text{\textyen} /kWh]}$. We consider two cases in this game model, namely $\pi_{\text{in}} = 5 ~ \mathrm{[\text{\textyen} /kWh]}$, representing the buy-in price from users, and $\pi_{\text{in}} = -5 ~ \mathrm{[\text{\textyen} /kWh]}$ indicating a penalty to users for excessive power generation.
The investment strategy is the Nash equilibrium, which can be obtained by iteratively solving the optimization on the user's side \eqref{stochastic problem users part} and the manager's side \eqref{stochastic problem manager part}. 

The comparison of simulation results among different models in the community is shown in \Cref{fig:comparison among investment plans}. 
The individual model corresponds to the first scenario \eqref{stochastic problem individual users}, while the game models correspond to the second scenario \eqref{stochastic problem users part}, \eqref{stochastic problem manager part}. On the other hand, the global model represents the ideal case \eqref{eq:global_optimization_costfun}, where the entire neighborhood is treated as a single user.
\Cref{fig:joint model cost comparison} shows the comparison of the total annual costs in the community with the past data. Both the individual and joint scenarios result in lower total costs compared to the non-optimized scenario. It is expected that the total costs in both scenarios are higher than the lower bound obtained from the global model. This gap arises due to the lack of sharing for the first scenario and the conflicting interests among individual users for the second scenario.

\mml{It is worth noting that the total cost from the individual model is lower than that of the joint investment game model when $\pi_\text{in} = 5~\mathrm{[\text{\textyen} /kWh]}$.
This situation arises as users can profit by selling their excess electricity to the manager, thereby leading to an amplified investment in PV panels to maximize this financial advantage. However, this over-investment ultimately escalates the manager's expenses, resulting in elevated overall costs.
To solve the above issue, let the excessive power generation be penalized similarly to the individual model. Note that the penalty is less than $\pi_{\text{rev}} = 20 ~ \mathrm{[\text{\textyen} /kWh]}$, which still provides users an incentive to participate in the joint investment. The simulation result in \Cref{fig:joint model cost comparison} shows a decrease in total cost.
Intriguingly, this discovery compels us to reassess the intuitive understanding of the sharing economy. It underscores the necessity for an appropriate sharing strategy to prevent users from over-exploiting communal resources, which, without careful management, could lead to inflated overall costs in the long run. It thus highlights the challenges of implementing the joint model, as it requires a significant effort to reconcile the competing interests of users.}
Furthermore, \Cref{fig:total pv comparison} illustrates the total investment amount for PV panels in the community, indicating a clear incentive to invest in PV panels when energy recycling from users is taken into account. \Cref{fig:total battery comparison} depicts  the total investment amount for battery storage in the community, which suggests that sharing energy can help restrain redundant battery investments.
\begin{figure}[htbp]
    \centering
\begin{subfigure}[b]{1\linewidth}
         \centering
         \includegraphics[width = 1\linewidth]{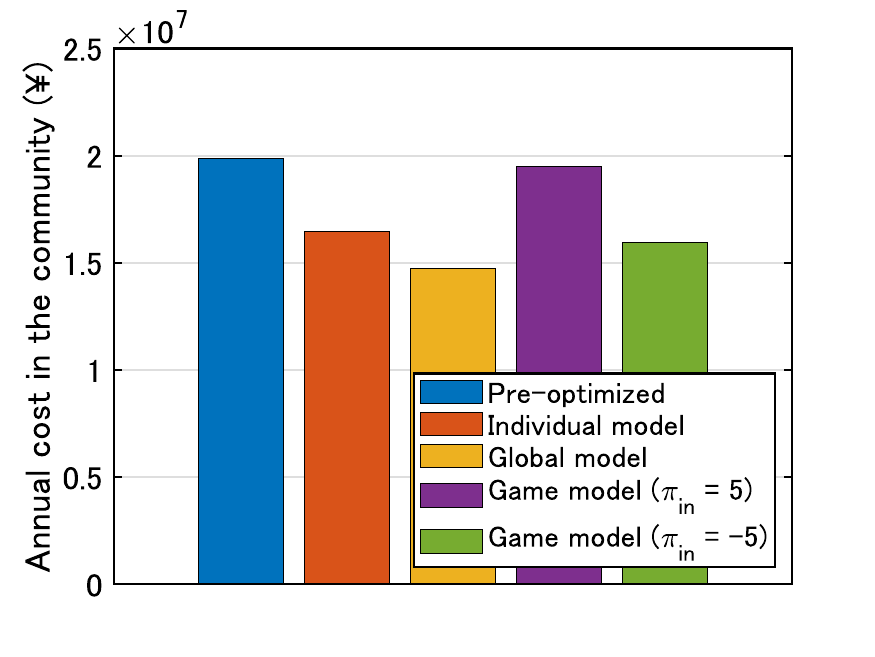}
         \caption{Comparison of the total annual cost in the community among four models with the past data (pre-optimized, individual model, global optimization model, game scenario).}
         \label{fig:joint model cost comparison}
    \end{subfigure}
\begin{subfigure}[b]{1\linewidth}
         \centering
         \includegraphics[width = 1\linewidth]{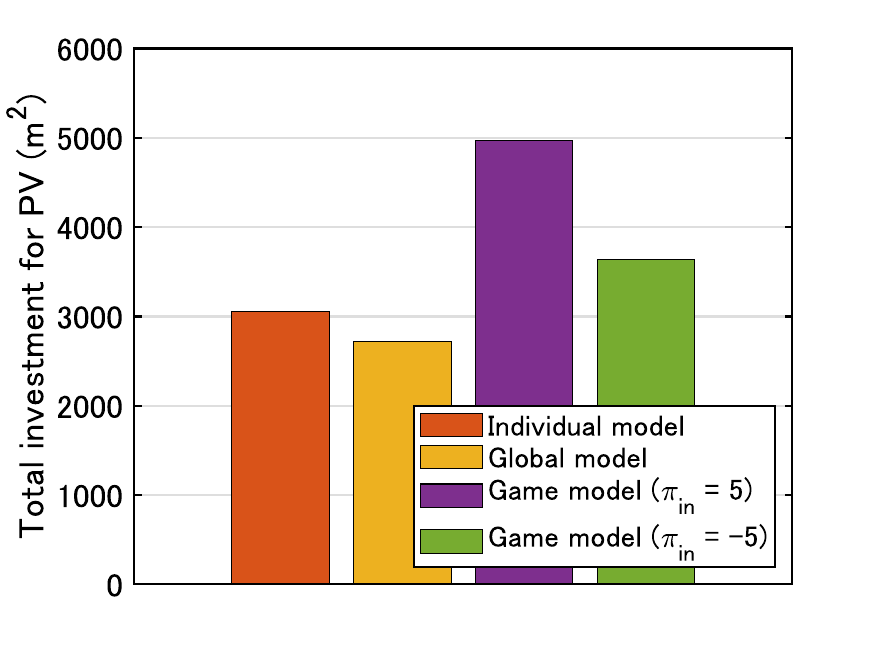}
         \caption{Comparison of the total investment for PV panels among three models.}
         \label{fig:total pv comparison}
    \end{subfigure}
\begin{subfigure}[b]{1\linewidth}
     \centering
     \includegraphics[width = 1\linewidth]{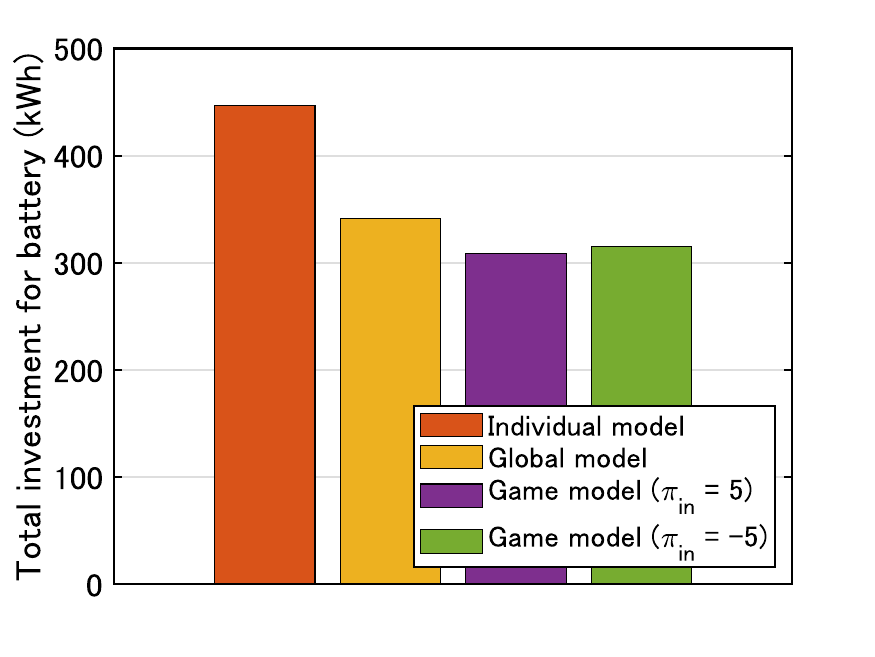}
     \caption{Comparison of total investment for battery storage among three models.}
     \label{fig:total battery comparison}
\end{subfigure}
    \caption{The the total annual costs, PV panels and battery capacity investment in the community for all models.}
    \label{fig:comparison among investment plans}
\end{figure}

\ml{All the stochastic problems are approximated as deterministic problems through the Monte Carlo approach. Let $\varepsilon = 5000 \mathrm{[\text{\textyen}]}$ and the significance level $\alpha = 0.01$. The estimation of the sample sizes yields $N = 2 \times 10^{6}$ for \Cref{lem: sample size for scenario one}, $N = 3 \times 10^{3}$ for \Cref{lem: sample size for scenario two a} and $N = 1.3 \times 10^{5}$ for \Cref{lem: sample size for scenario two Ca}, respectively.
We remark that conservative estimations of the variances are carried out in Lemma~\ref{lem: sample size for scenario one}--\ref{lem: sample size for scenario two Ca} and thus only conservative estimates of the required sample sizes are provided. In practice, much smaller sample sizes are sufficient to guarantee accurate results. However, finding a tight sample size is not our goal in this paper.
}

\section{Conclusion}\label{Section Conclusion}
In this paper, we have explored the investment strategy for PV panels and battery storage to attain the Net-Zero Energy House status within a regional power system comprising a manager and multiple users.

We have demonstrated through a case study in Kitakyushu that incorporating battery storage into the power system effectively reduces power imbalances and enhances energy utilization efficiency, which is crucial for attaining ZEH objectives. 
Furthermore, our analysis of the two proposed scenarios has revealed their potential to significantly decrease annual electricity costs.

Additionally, we have highlighted the importance of implementing a proper sharing policy to incentivize individual users to share electricity while preventing the excessive exploitation of communal resources. Our future work will concentrate on formulating games that meet these requirements and foster the desired outcomes.

\section*{Acknowledgment}
This work is supported by the Ministry of the Environment, Government of Japan.

\bibliographystyle{iet}
\bibliography{References}

\end{document}